\documentclass[a4paper,11pt]{article}
\usepackage{amsthm,amsfonts,amssymb,euscript}
\usepackage{latexsym, multicol, fancybox}
\usepackage{graphicx}
\usepackage{color}
\usepackage{amsmath, amsthm, amssymb, bm}
\usepackage{epstopdf}
\usepackage{caption}
\usepackage{psfrag}
\usepackage{slashed}
\usepackage{verbatim}

	\def\grad{\mbox{grad}}												

\usepackage{mathrsfs}
 \usepackage{xcolor}
 \usepackage[citebordercolor={green}]{hyperref}
 \usepackage{tikz}
 \usepackage{bbm}
 \usepackage{dsfont}
 \usepackage{bbold}
  \usepackage{cite}
  \usepackage{hyperref}
\usepackage{cleveref}
\usepackage[normalem]{ulem}
\usepackage{cancel}

\numberwithin{equation}{section}

\newtheorem{theorem}{Theorem}[section]
\newtheorem{lemma}[theorem]{Lemma}
\newtheorem{proposition}[theorem]{Proposition}

\newtheorem{definition}[theorem]{Definition}
\newtheorem{remark}[theorem]{Remark}

\newcommand{\bea}{\begin{eqnarray}}
\newcommand{\eea}{\end{eqnarray}}
\newcommand{\lab}{\lababel}

\DeclareFontFamily{U}{mathx}{\hyphenchar\font45}
\DeclareFontShape{U}{mathx}{m}{n}{
      <5> <6> <7> <8> <9> <10>
      <10.95> <12> <14.4> <17.28> <20.74> <24.88>
      mathx10
      }{}
\DeclareSymbolFont{mathx}{U}{mathx}{m}{n}
\DeclareFontSubstitution{U}{mathx}{m}{n}
\DeclareMathAccent{\widecheck}{0}{mathx}{"71}

\def\beaa{\begin{eqnarray*}}
\def\eeaa{\end{eqnarray*}}
\def\ba{\begin{array}}
\def\ea{\end{array}}
\def\be#1{\begin{equation} \label{#1}}
\def \eeq{\end{equation}}
\def\bsplit{\begin{split}}

\def\pa{\partial}

\def\dual{{\,^*}}
\def\div{\mathrm{div}\,}

\def\hot{\widehat{\otimes}}

\def\lab{\label}
\def\f12{\frac 1 2}

\def\a{{\alpha}}

\def\b{{\beta}}
\def\be{{\beta}}
\def\ga{\gamma}
\def\Ga{\Gamma}
\def\de{\delta}

\def\ka{\kappa}
\def\eps{\varepsilon}
\def\la{\lambda}

\def\si{\sigma}
\def\Si{\Sigma}
\def\om{\omega}
\def\Om{\Omega}
\def\Th{\Theta}

\def\th{\theta}
\def\vth{{\vartheta}}

\def\ka{\kappa}
\def\nab{\nabla}

\def\D{{\bf D}}

\def\H{{\bf H}}

\def\R{{\bf R}}

\def\W{{\bf W}}

\def\g{{\bf g}}

\def\k{\Th}

\def\BB{{\mathcal B}}
\def\CC{{\mathcal C}}
\def\MM{{\mathcal M}}

\def\OO{{\mathcal O}}

\def\RR{{\mathcal R}}

\def\RR{\mathcal{R}}
\def\OO{\mathcal{O}}

\def\II{{\mathcal I}}

\def\f12{{\frac 1 2}}

\def\BB{\mathcal{B}}
\def\BBd{{}^*\mathcal{B}}
\def\KK{\mathcal{K}}
\def\KKd{{}^{*}\mathcal{K}}

\def\Hb{\underline{H}}

\def\ub{\underline{u}}

\def\trch{{\mathrm{tr}}\, \chi}
\def\chih{{\widehat \chi}}
\def\chib{{\underline \chi}}
\def\chibh{{\underline{\chih}}}

\def\etab{{\underline \eta}}
\def\omb{{\underline{\om}}}
\def\bb{{\underline{\b}}}
\def\aa{\protect\underline{\a}}
\def\xib{{\underline \xi}}

\def\trchb{{\tr \,\chib}}
\def\tr{{\mathrm{tr}}\, }

\def\rhod{\, \hspace{-3pt}\dual\hspace{-2pt}\rho}

\def\Bb{\mathscr{B}}
\def\Bbd{\, ^{*  \hspace{-0.15em}}\Bb}

\def\trt{\trh \Th}

\def\Kc{\widecheck{K}}
\def\ah{\hat{a}}

\def\trth{\slashed{\mathrm{tr}}\, \th}
\def\thh{\widehat{\th}\hspace{0.05em}}
\def\thc{\widecheck{\trth}}

\def\Ricc{\mathbf{Ric}}

\def\kh{\widehat{\k}}
\def\Thh{\widehat{\Th}}

\def\0{_0}
\def\1{}

\def\ao{\widecheck{a}}

 \def\nabh{\nab\mkern-12mu /\,}

\def\laph{\slashed{\Delta}}

\def\divh{\slashed{\mathrm{div}}\hspace{0.1em}}
\def\curlh{\slashed{\mathrm{curl}}\hspace{0.1em}}
\def\trh{\slashed{\mathrm{tr}}\hspace{0.1em}}

\def\Kk{\mathscr{K}}
\def\Kkd{\, ^{*  \hspace{-0.3em}}\Kk}

\def\trch{\trh\chi}
\def\trchb{\trh\chib}

\def\divz{\slashed{\mathrm{div}}^{(0)}}
\def\curlz{\slashed{\mathrm{curl}}^{(0)}}

\def\d{\slashed{\mathcal{D}}}
\def\Kc{\widecheck{K}}

\def\ah{\hat{a}}

\def\Ricc{\mathbf{Ric}}

 \def\nabh{\nab\mkern-12mu /\,}

\def\0{^{(0)}}

\def\nabz{\nabh^{(0)\hspace{-0.1em}}}

\def\pp{p}

 \def\gz{\ga^{(0)}}
 \def\H{\mathfrak{h}}

\def\gacr{\mathring \ga}

\def\trchc{\widecheck{\trch}}
\def\trchbc{\widecheck{\trchb}}
\def\rhoc{\widecheck{\rho}}
\def\rhodc{\dual\widecheck{\rho}}
 
\begin{document}


\title{Formation of Trapped Surfaces from Spacelike Initial Data}
\author{Xuantao Chen and Sergiu Klainerman}


\date{}
\maketitle
\abstract{We make use of the free data formalism developed in \cite{CK25,CK26}  to provide a direct construction of  short-pulse type Cauchy data. The construction of the spacelike short-pulse follows from the local existence result established in \cite{CK26}.  The forward integration construction in \cite{CK26} allows us to show that such data can be extended to a set of asymptotically flat Cauchy data. This greatly extends the result of Li--Yu \cite{LiYu}.}
  
\parindent = 0 pt
\parskip = 12 pt

\tableofcontents

\section{Introduction}

In his groundbreaking work  \cite{Chr1}, Christodoulou found an important criterion for regular characteristic data,  known under the name of short pulse,   which leads through evolutions to the formation of trapped surfaces. His work was further developed in \cite{KRodn},  \cite{KLR}, 
\cite{An12}, \cite{AnLuk}, \cite{An19} etc. The defining feature of all these works is that the short pulse data is prescribed on one of two transversal characteristic hypersurfaces, as in Figure \ref{fig:finite-short-pulse} below. In that case, one can take advantage of the fact that one can easily identify the free data on each  hypersurface.\footnote{One can freely prescribe the shear on each hypersurface.}

 Prescribing regular  Cauchy  data sets $(\Si,  g, k)$,  which lead to trapped surfaces,    is  considerably more  difficult  due to the   difficulty of 
  identifying the free  Cauchy data in that case  (on which some adapted version of the short pulse can be prescribed),    i.e., the couple    $(g, k)$  which solve the constraint equations
\begin{equation}\lab{ece}
    \begin{split}
        \div k-\nab\, \tr k&=0,\\
        R_g+(\tr k)^2-|k|^2&=0.
    \end{split}
\end{equation}
Indirectly, we know that such data    must exist,   as a consequence of  the   past null infinity version of the results mentioned above,  as suggested  by    Figure \ref{fig:finite-short-pulse}.    A   more  direct  construction of the asymptotically flat Cauchy data is given by Li--Yu \cite{LiYu} (see also Li--Mei \cite{LiMei-CollapsingVacuum}). They make use of the  local spacetime region constructed in \cite{Chr1}, and apply the gluing technique of Corvino--Schoen \cite{Corvino2000,CorvinoSchoen} (see also Chruściel--Delay \cite{ChruscielDelay2003}) to obtain data that is isometric to some Kerr initial data in the exterior. We also refer to the recent advances \cite{ShenWan2025, GiorgiShenWanBosonStars, ShenWan26ADM} in the multi-trapped surfaces setting. In all these cases, however,   there is no direct  identification  of   a short pulse criterion on  a given initial   hypersurface  $\Si$.

\usetikzlibrary{arrows.meta}

\noindent

\begin{figure}[htbp]
\lab{fig1}
\centering
\begin{minipage}[t]{0.48\textwidth}
\raggedleft
\begin{tikzpicture}[
baseline=(current bounding box.north),
    scale=0.75,
    line cap=round,
    line join=round,
    >=Stealth,
    lab/.style={font=\small}
]

\coordinate (A) at (-0.55,-0.20);
\coordinate (B) at (0.20,0.55);

\coordinate (L) at (-2.85,2.10);
\coordinate (T) at (-2.10+0.5,2.85-0.5);

\coordinate (P) at (1.55+1,1.90+1);

\coordinate (IpLow) at (2.05+1,1.40+1);
\coordinate (IpHigh) at (1.15+1-0.5,2.30+1+0.5);

\fill[gray!20] (A) -- (B) -- (T) -- (L) -- cycle;

\draw[thick] (A) -- (B);
\draw[thick] (A) -- (L);
\draw[thick] (B) -- (T);
\draw[thick] (L) -- (T);

\draw[dashed, thick] (B) -- (P);

\draw[dashed, thick] (IpHigh) -- (P);
\node[lab, above] at (2.5,3.4) {$\mathcal I^+$};

\fill (P) circle (1.8pt);

\draw[->, thick] (-0.42,-0.07) -- (-0.82,0.33);
\draw[->, thick] (-0.23,0.12) -- (-0.63,0.52);
\draw[->, thick] (-0.04,0.31) -- (-0.44,0.71);
\draw[->, thick] (0.12,0.47) -- (-0.28,0.87);

\fill (T) circle (2.4pt);
\node[lab, above] at (T) {\scriptsize trapped surface};

\node[lab, below right] at (0.70,1.05) {};

\end{tikzpicture}\end{minipage}
\hfill
\begin{minipage}[t]{0.45\textwidth}
\raggedright
\begin{tikzpicture}[
baseline=(current bounding box.north),
    scale=0.65,
    line cap=round,
    line join=round,
    >=Stealth,
    lab/.style={font=\small}
]

\coordinate (A) at (-0.45+1,-0.45+1);
\coordinate (B) at (0.35+1,0.35+1);

\coordinate (L) at (-3.00+1,2.10+1);
\coordinate (T) at (-1.55+1,2.25+1);

\fill[gray!20] (A) -- (B) -- (T) -- (L) -- cycle;

\draw[dashed, thick] (-1.5,-1.5) -- (3.0,3.0);

\draw[dashed, thick] (3.0,3.0) -- (2.4,3.6);

\draw[thick,red] ({-3+1},{2.1+1}) .. controls (0.5,2.5) .. (3,3);
\fill (3.0,3.0) circle (2.1pt);
\node[lab, above right] at (3.0,3.0) {$i^0$};

\node[lab, right] at (0.95,0.72) {$\II^-$};

\draw[thick] (A) -- (L);
\draw[thick] (B) -- (T);

\draw[thick] (L) -- (T);

\draw[->, thick] (-0.28+1,-0.28+1) -- (-0.78+1,0.22+1);
\draw[->, thick] (-0.08+1,-0.08+1) -- (-0.58+1,0.42+1);
\draw[->, thick] (0.12+1,0.12+1) -- (-0.38+1,0.62+1);
\draw[->, thick] (0.30+1,0.30+1) -- (-0.20+1,0.80+1);

\fill (T) circle (2.4pt);
\node[lab, above] at (T) {\scriptsize trapped surface};

\end{tikzpicture}

\end{minipage}
\centering
\captionsetup{width=0.6\textwidth, font=small}
\caption{Penrose diagrams for finite-region short pulse (on the left) and the short pulse from past null infinity (on the right).}

\label{fig:finite-short-pulse}

\end{figure}

In this paper, we provide a  first  such direct construction\footnote{That is, we do not  appeal  to   spacetime evolution result, based on characteristic data.} of   \textit{short-pulse type}, regular,   Cauchy data  whose evolution  necessarily  contain a  trapped surface. To do that we appeal to  our previous results  \cite{CK25} and, especially  \cite{CK26}, in    the  framework of \textit{free data}     introduced in  these papers.
 More precisely:
 \begin{enumerate}
 \item  We make use of the   local existence  result  in \cite{CK26} in which the free data  $(\BB,\BBd, \KK,\KKd)$ (as identified in \cite{CK25}) are allowed to be large, consistent with  the short pulse  regime of  An--Luk \cite{AnLuk}, where an independent largeness parameter $a>0$ was  introduced.
 \item   We  extend  the local  data  by integrating  in the radial direction towards infinity,  using the global result  of  \cite{CK26}, rather than relying on gluing as in   \cite{LiYu}.
 \end{enumerate}
 
 \subsection{Review of the results in \cite{CK25,CK26}}


In \cite{CK25,CK26}, we developed a free data formalism for the Einstein constraint equations, in which we  decompose \eqref{ece} as a coupled system of transport and (2D) elliptic equations, and identify the free data $(\BB,\BBd,\KK,\KKd)$ as  four scalars supported on $\ell\geq 2$ spherical harmonic modes and verifying  (see Section \ref{secion:Notations} for the relevant  notations)
\begin{equation}\lab{eq:prescribed-scalar-conditions}
\begin{gathered}
(\divh Y-\BB)_{\ell\geq 2}=0,\quad (\curlh Y-\BBd)_{\ell\geq 2}=0, \\
\Big(\laph \big( \Pi+\frac 12 \trt \big)-\KK\Big)_{\ell\geq 2}=0,\quad \big(r^{-4} \ah N(r^4 \curlh \Xi)-\KKd\big)_{\ell\geq 2}=0.
\end{gathered}
\end{equation}
The free  scalars  have a similar physical interpretation  as    the two  shear tensors  of  the characteristic framework,  and the same number of degrees of freedom,  i.e.,   modulo gauge choices, they  exhaust the degrees of freedom of solutions to \eqref{ece}; see \cite[Section 2]{CK25} for detailed discussions.

Meanwhile, in \cite{CK25,CK26}, we deal with two different scenarios:
\begin{itemize}
\item Integration from infinity in \cite{CK25}: We prescribe the free scalar with relatively fast decaying rate, as well as the ADM parameters at infinity, and provides a construction of data in an exterior region.\footnote{To make it a complete Cauchy data on $\mathbb{R}^3$, one can apply the fill-in result by Bieri--Chru\'sciel \cite{BieriChrusciel2017}
 or Mao--Oh--Tao \cite{MaoOhTao2023}.} Such data can decay at the slowest $O(r^{-1-\de})$ at the metric level (subtracting the mass tail), and can be aribitraily fast.
\item Integration towards infinity in \cite{CK26}: Given the data on a finite sphere, we prescribe the free scalars on its exterior side. We then integrate in the forward direction of $r$ to obtain the solution to \eqref{ece}. This produces data with slow decay rates, in particular those with borderline decay studied in \cite{Shen23, Shen24}.
\end{itemize}
The result in \cite{CK26} combined a   local existence result, based on an iteration procedure,  with a global  extension  argument.  
We proved in fact a stronger local existence  result than needed in the context of  that paper, one in which  the free scalars $(\BB,\BBd,\KK,\KKd)$ can have arbitrarily large size in terms of  a  parameter   $a>0$.  As it turns out, this  fits well with  the  scale-critical framework of An--Luk \cite{AnLuk}.


\subsection{Statement of the result}

\begin{theorem}[Main theorem]\lab{thm:short-pulse-data}
Let $a>0$ be a universal large constant. Consider the unit ball $\{|x|\le 1$\} with trivial data.\footnote{This assumption can be easily relaxed.}
There exists a small constant $\eps_0>0$, independent of $a$, 
such that given two sets of free scalars $(\BB_{sp},\BBd_{sp},\KK_{sp},\KKd_{sp})$ and $(\BB_{ext},\BBd_{ext},\KK_{ext},\KKd_{ext})$, defined respectively on $(1,2)\times \mathbb{S}^2$ and $(1,\infty)\times \mathbb{S}^2$ and satisfying, for some $s\geq 6$ and $\si \in [0,1)$, with the norms defined in Definition \ref{def:L2-Hs-S_r},
\begin{equation*}
\begin{gathered}
\sup_{r\in [1,2]} \| (\BB_{sp},\BBd_{sp},\KK_{sp},\KKd_{sp})\|_{\H^s(S_r)} \leq a^\frac 12,\\
\sup_{r\in (1,\infty)} r^{2+\si} \| (\BB_{ext},\BBd_{ext},\KK_{ext},\KKd_{ext}) \|_{\H^s(S_r)} \leq \eps_0,
\end{gathered}
\end{equation*}
the following statements hold:
\begin{enumerate}
\item\lab{statement:existence-data}
 There exists a complete Cauchy data $(g,k)$ on $\mathbb{R}^3$ solving \eqref{ece}, matching  the data inside the unit ball,  and, for some $\de>0$, verifies \eqref{eq:prescribed-scalar-conditions} for $(\BB_{sp},\BBd_{sp},\KK_{sp},\KKd_{sp})$ on $(1,1+\de)\times \mathbb{S}^2$, and for $(\BB_{ext},\BBd_{ext},\KK_{ext},\KKd_{ext})$ on $(1+\de,\infty)\times \mathbb{S}^2$. 
 \item\lab{statement:semi-global-existence}
   As in the familiar characteristic case, our data  can be  evolved into a  spacetime region $\MM_{\CC}$, by      a  semi-global existence. 
\item\lab{statement:trapped-surface}
Moreover, if the lower bound condition \eqref{eq:lower-bound-condition-free-scalars} is verified, then  the future Cauchy development of $(\Si,g,k)$ must contain a trapped surface.
\end{enumerate}
\end{theorem}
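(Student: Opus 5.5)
The proof has three stages, matching the three statements, and relies throughout on the results of \cite{CK26}.

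\emph{Step 1 (local short-pulse data).} Starting from trivial data on $S_1=\{|x|=1\}$, we apply the large-data local existence result of \cite{CK26} with the free scalars $(\BB_{sp},\BBd_{sp},\KK_{sp},\KKd_{sp})$. Their size in $\H^s(S_r)$ is $a^{1/2}$, which is allowed there. The iteration is run in An--Luk type scale-critical norms. We renormalize the unknowns (the metric components, $\Pi$, $\trt$, $Y$, $\Xi$, and the lapse $\ah$) with appropriate powers of $a$ and the thickness $\de$ of the slab $(1,1+\de)$. The transport equations in $r$ then give $\int_1^{1+\de} a^{1/2}\,dr\lesssim \de a^{1/2}$ gains, and the quadratic terms close provided $\de$ is chosen as in the short-pulse regime, e.g.\ $\de=a^{-1}$ up to a constant. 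This is the analogue of $\de a\sim 1$ in An--Luk. The result is a solution of \eqref{ece} on $1\le r\le 1+\de$ satisfying \eqref{eq:prescribed-scalar-conditions}. At $S_{1+\de}$, the lower-order quantities (such as $\trch-2/r$ and $\rho$) are $O(1)$ or smaller, and the first-order large quantities have been integrated into a controlled form.

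\emph{Step 2 (exterior extension).} We then use $S_{1+\de}$ as the initial sphere for the forward-integration global result of \cite{CK26}, prescribing $(\BB_{ext},\dots)$, which have size $\eps_0 r^{-2-\si}$, on $(1+\de,\infty)$. The main obstacle is that the data on $S_{1+\de}$ are not small: the short pulse leaves a large, essentially shear-type, footprint. We plan to handle it as follows. First, after the pulse, the quantities entering the global existence smallness condition (deviation of $g$ from round, the $\ell\ge 2$ parts of $k$, and mass-like quantities) are shown to be bounded by an $a$-independent constant times $\de a\lesssim 1$ at the scale critical for \cite{CK26}. Second, we choose the short-pulse profile so that these quantities fall within the smallness threshold. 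If necessary, we rescale $r$ so that the sphere is effectively at large radius, which makes the renormalized data small. Then $\eps_0$ can be fixed independently of $a$, and we obtain asymptotically flat data on $\mathbb{R}^3$, completing statement~\ref{statement:existence-data}.

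\emph{Steps 3 and 4 (evolution and trapping).} For statement~\ref{statement:semi-global-existence}, we evolve the data using standard Cauchy stability near $\{|x|\le 1+\de\}$, where the data are trivial or short-pulse over a thin shell, to produce an incoming null cone $\CC$ through the shell. On $\CC$, the induced characteristic data are of An--Luk type, with shear of size $a^{1/2}$ over length $\de$, and we invoke the An--Luk semi-global existence theorem to obtain $\MM_\CC$. For statement~\ref{statement:trapped-surface}, we translate the lower bound \eqref{eq:lower-bound-condition-free-scalars} on the free scalars into the An--Luk lower bound $\int |\chih|^2\gtrsim a$ along generators. The translation uses the identification of $(\BB,\KK)$ with the shear components, up to lower-order errors controlled in Step~1. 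An--Luk's trapped surface formation argument then applies.
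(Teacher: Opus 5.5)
\textbf{There is a genuine gap in your Step 2, and it comes from the regime you choose in Step 1.}

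With $\de\sim a^{-1}$, i.e.\ $\de a\sim 1$, the transport equations for $\thc$ and $\trt$ pick up the terms $-\frac12\int_1^r(|\thh|^2+|\Thh|^2)\,dr'$ and $-\int_1^r\thh\cdot\Thh\,dr'$. These are of size $\de a\sim 1$ and in general depend on the angle. This is exactly the Christodoulou regime of Appendix~\ref{appendix:a-de-inverse}. There the paper points out that the sphere data at $S_{1+\de}$ are then not close to round, and that forward extension is obstructed unless one imposes additional Li--Yu type conditions.

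Neither of your remedies helps.
\begin{itemize}
\item You cannot choose the profile $(\BB_{sp},\dots)$: it is given, and Part 1 must hold for every admissible choice.
\item Rescaling $r$ changes nothing. The almost-roundness quantities of Definition~\ref{def:almost-round-sphere-data}, namely $r_0^{-1}\|u\|_{\H_0}$ and $\|\thc,\trt,\Xi\|_{\H_0}$, are invariant under $(g,k)\mapsto(\lambda^2 g,\lambda k)$.
\end{itemize}

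You have also misidentified what must be small at $S_{1+\de}$. The canonical sphere data consist only of $u$, $\trth$, $\trt$ and $\Xi$. The shears $\thh,\Thh$ (your ``$\ell\geq 2$ part of $k$'') stay of size $a^{1/2}$ for every $\de$, so if they had to be small the argument would be hopeless. But they are not sphere data: their $N$-derivatives never enter the construction, so they may jump across $S_{1+\de}$, and in the exterior they are determined afresh by the Codazzi--Hodge systems with the exterior free scalars.

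The fix is to use the freedom in $\de$, which the statement allows. Take $\de$ with $\de a=\eps_0\ll 1$; this is also the regime $\eps_{loc}a\ll1$ in which Theorem~\ref{thm:local-existence-precise} applies as stated. Proposition~\ref{prop:bounds-local-existence} then gives $\Kc,\Xi=O(\de a^{1/2})$ and $\thc,\trt=O(\de a)$. So $S_{1+\de}$ carries $O(\eps_0)$-almost round sphere data, and Theorem~\ref{main-thm-CK26} applies directly with the exterior scalars.

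\textbf{Your Steps 3--4 also skip the main work of Parts 2 and 3.}

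\emph{The triangular region.} ``Standard Cauchy stability'' near the shell gives no usable control. The data there are large ($\thh,\Thh,\Pi,Y\sim a^{1/2}$), and the hypotheses control only angular derivatives; the shears may even jump radially. What is needed is the scale-critical double-null bootstrap in $\MM_{\triangle}$:
\begin{itemize}
\item set $u=-r$, $\ub=r$ on $\Si$ and read off the null data ($e_3=T-N$, $e_4=T+N$, $\chi=\th+\Th$, $\zeta=-\Xi$, and so on);
\item prove energy estimates for the Bianchi pairs $(\b,(\rhoc,\rhodc))$ and $((\rhoc,\rhodc),\bb)$, in which $\alpha$ and $\underline{\alpha}$ decouple;
\item use Hodge estimates for $\chih,\chibh$, and An--Luk renormalized quantities at top order.
\end{itemize}

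\emph{The characteristic data.} These must be read on the outgoing cone $H_{-1}$ from $S_1$, not on an incoming cone through the shell. On $H_{-1}$ they have $\om\sim a^{1/2}$ rather than the normalized $\om=0$, and you must check that the semi-global argument tolerates this; the paper does so within \cite{CK-TS}.

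\emph{The trapping criterion.} Condition \eqref{eq:lower-bound-condition-free-scalars} is a $\sup_\vth$ condition. It therefore calls for the anisotropic mechanism of \cite{KLR} and \cite{AnHan}, not the isotropic An--Luk lower bound. You also need to carry the lower bound on $\chih$ from $\Si$ to $H_{-1}$, which uses the $\nabh_3\chih$ equation together with the $\MM_{\triangle}$ bounds.
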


We will prove statement \ref{statement:existence-data} in Section \ref{sec:construction-Cauchy-data}, and statements \ref{statement:semi-global-existence} and \ref{statement:trapped-surface} in Section \ref{sec:spacetime-evolution}.  
\begin{remark}
Note that all four free scalars in the short-pulse region are allowed to be large. In the null spacetime language, this means that both the incoming and outgoing gravitational waves may be strong; see Figure \ref{figure:triangular region}. This shows that characteristic short-pulse data can arise from a larger class of spacelike data, without imposing a hierarchy in which the incoming radiation is dominant. In particular, our result includes a family of time-symmetric data, parametrized by $(\BB_{sp},\BBd_{sp})$, which is not accessible through the previous indirect methods.
\end{remark}

\begin{figure}[htbp]
    \centering
\begin{tikzpicture}[
    scale=2.5,
    >=Stealth,
    line cap=round,
    line join=round
]
\fill[green!10]
    (1,0) -- (0,1) -- (0.3,1.1) -- (1.2,0.2) -- cycle;
\fill[blue!10]
    (1,0) -- (1.2,0.2) -- (1.4,0) -- cycle;
\draw[thick]
    (1,0) -- (0,1) -- (0.3,1.1) -- (1.4,0) -- cycle;
        \filldraw (1.2,0) node[below,font=\scriptsize,text width=1.2cm,align=center] {short pulse};
        
\draw[->, thick] (1.08,0.02) -- (0.96,0.14);
\draw[->, thick] (1.16,0.02) -- (1.04,0.14);

\draw[->, thick] (1.24,0.02) -- (1.36,0.14);
\draw[->, thick] (1.32,0.02) -- (1.44,0.14);
        \draw[dashed] (0,0) -- (1,0) node[midway,below,font=\scriptsize,text width=1.7cm,align=center] {trivial data (or perturbation)};

\draw[thick] (1.2,0) -- (2.5,0)
node[midway,below,font=\scriptsize,text width=1.7cm,align=center] {Forward integration};
  \definecolor{paperbg}{RGB}{250,248,240}
\filldraw[fill=paperbg, draw=black] (2.5, 0) circle (0.6pt);
\node[below] at (2.5, 0) {\scriptsize $i^0$};

\node[above] at (0.6, 0.5) {\scriptsize $\MM_{\CC}$};
  \filldraw[black] (0.3,1.1) circle (0.6pt) node[above] {\scriptsize trapped surface};

      \coordinate (p) at (1.2,0.1);

\node[anchor=west, inner sep=0pt] (lab) at (1.5,0.5) {\scriptsize $\MM_{\triangle}$};
\draw[->] ([xshift=0.5pt,yshift=-0.5pt]lab.south west) to[out=220,in=40] (p);
      
    \end{tikzpicture}
    \centering
    \captionsetup{width=0.6\textwidth, font=small}
    \caption{Illustration of Theorem \ref{thm:short-pulse-data}.}\lab{figure:triangular region}
\end{figure}

\begin{remark}[Comparison with Li--Yu \cite{LiYu}]
The result of Li--Yu \cite{LiYu}  is based on gluing the interior data to a Kerr exterior. The main obstruction is  given by  the $\ell\leq 1$ modes, 
due to which   one needs to pick the correct ADM parameters,  of  the exterior Kerr family, to  implement the gluing.\footnote{This can be thought of as a special case of the backward construction in our previous work \cite{CK25}. Indeed, the new ingredient we add in \cite{CK25} is that we can also freely add $\ell\geq 2$ parts to the data through the free scalars.} 
In contrast, the forward construction\footnote{ We note, however,    that the $\ell\leq 1$   modes  play an important role in the   construction of solutions in \cite{CK25,CK26}.}  in this paper directly extends the interior data to spatial infinity. 
\end{remark}

\subsection{Acknowledgements}
The authors thank Xinliang An for helpful comments that partly motivated the inclusion of the appendix. 
The first author is supported by ERC-2023 AdG 101141855 BlaHSt. The second author is supported by the NSF grant 2453843. 

\section{Notations, Preliminaries}
\lab{secion:Notations}

\subsection{Geometric quantities introduced in \cite{CK25}}\lab{sec:geometric-quantities-Si}

We recall the notation of geometric quantities introduced in \cite{CK25}. Assume that $\Si:=(1,\infty)\times \mathbb{S}^2$ is equipped with a metric $g$ and the associated Levi--Civita connection $\nab$, and consider the foliation by the level spheres $S_r=\{r=\mathrm{const}\}$. 
Let $N$ be the unit normal to $S_r$ in $\Si$. 
We then choose an orthonormal frame $\{e_1,e_2\}$ tangent to $S_r$, so that $\{N,e_1,e_2\}$ forms an orthonormal frame on $\Si$.

  {\bf Induced metric.}   
The induced metric on  $S_r$ is denoted by $\ga$, with the associated Levi--Civita connection $\nabh$.         
The Gauss curvature of the $r$-spheres is denoted by $K=K_\ga$.     Given standard spherical coordinates $(\vth^A)$, we also denote $\gacr$ as the standard unit round sphere metric in $(\vth^A)$.

{\bf The lapse function.}
The lapse function of a given $r$-foliation is defined as $\ah:=|\nab r|_g^{-1}$. It is independent of the choice of the angular coordinates $(\vth^1,\vth^2)$.

We define the following quantities
 \begin{equation*}
 \begin{gathered}
 \pp_a:=g(\nab_N N,e_a),\quad  \th_{ab}:=g(\nab_a N, e_b),\\
\slashed{R}_{ab}:=R(N,e_a,N,e_b),\quad Y_a:=R(N, e_b, e_b, e_a), \\
\Th_{ab}:=k(e_a,e_b),\quad \Xi_a:=k(N,e_a),\quad \Pi :=k(N,N).
\end{gathered}
\end{equation*}
 The (sphere) trace and the traceless part of $\th$ are denoted respectively by $\trth$ and $\thh$.  The sphere trace and the traceless part of $\k$ are denoted respectively by $\trh \k$ and $\kh$.

\begin{lemma}\lab{lem:loga}
We have 
\bea\lab{eq:P-loga}
\pp=-\slashed{d} (\log\ah),
\eea
where $\slashed{d}$ denotes the differential on      $r$-level surfaces.
\end{lemma}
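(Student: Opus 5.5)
We prove the identity $\pp_a = -e_a(\log \ah)$ by writing $N$ in terms of the gradient of $r$ and differentiating. Since $\ah = |\nab r|_g^{-1}$, the unit normal to the level sets $S_r$ is
\begin{equation*}
N = \ah\, \nab r, \qquad \text{i.e.}\qquad N^i = \ah\, g^{ij}\pa_j r .
\end{equation*}
The covector $N^\flat = \ah\, dr$ is then $\ah$ times a closed form. This is the structural fact the whole argument uses.

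The key step is to compute $g(\nab_N N, e_a)$ via the symmetry of the Hessian. From $N^\flat = \ah\, dr$ we get, for any vectors $X, Z$,
\begin{equation*}
g(\nab_X N, Z) = X(\ah)\, dr(Z) + \ah\, \nab^2 r(X,Z).
\end{equation*}
Take $X = N$ and $Z = e_a$. Since $e_a$ is tangent to $S_r$, $dr(e_a)=0$, and therefore
\begin{equation*}
\pp_a = \ah\, \nab^2 r(N, e_a) = \ah\, \nab^2 r(e_a, N).
\end{equation*}
We used here that $\nab^2 r$ is symmetric, because $\nab$ is torsion free.

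Next we apply the same formula with $X = e_a$ and $Z = N$. Since $dr(N) = \ah^{-1}$, this gives
\begin{equation*}
g(\nab_a N, N) = e_a(\ah)\,\ah^{-1} + \ah\, \nab^2 r(e_a, N).
\end{equation*}
The left-hand side vanishes, since $g(\nab_a N, N) = \f12 e_a\big(g(N,N)\big) = 0$ because $N$ has unit length. Hence
\begin{equation*}
\ah\, \nab^2 r(e_a,N) = -e_a(\log \ah).
\end{equation*}
Combining this with the previous display gives $\pp_a = -e_a(\log\ah) = -(\slashed{d}\log\ah)_a$, which is \eqref{eq:P-loga}.

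There is no serious obstacle. The only point needing care is sign and normalization conventions: $N$ must be chosen pointing toward increasing $r$, so that $N = \ah\nab r$ and not $-\ah\nab r$. The result does not depend on this choice anyway, since $\nab_N N$ is invariant under $N \mapsto -N$. The definition $\pp_a = g(\nab_N N, e_a)$ must also be used exactly as stated. Independence from the angular coordinates is automatic, because only $r$ and $g$ enter the argument.
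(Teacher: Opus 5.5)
Your proof is correct and complete. Starting from $N^\flat=\ah\,dr$, you evaluate $g(\nab_X N,Z)$ at $(X,Z)=(N,e_a)$ and at $(X,Z)=(e_a,N)$, and linking the two via the symmetry of $\nab^2 r$ and $g(\nab_a N,N)=0$ yields exactly $\pp_a=-e_a(\log\ah)$. The paper gives no argument of its own here and simply cites Lemma 2.2 of \cite{CK25}; your computation is a valid self-contained derivation of that identity.
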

\begin{proof}
See Lemma 2.2 in \cite{CK25}.
\end{proof}

{\bf Gauge scalars.}   As in \cite{CK25}, we define
\begin{align}
\lab{eq:def-mu}
\mu& :=-\slashed{\Delta}(\log\ah)+K-\frac 14(\trth)^2, \\
\lab{eq:def-nu}
\nu & :=\divh\Xi=\de^{ab} \nabh_a \Xi_b,
\end{align}
where we recall the following definitions for two spherical tangent $1$-forms $\psi$, $\phi$, with $\in_{ab}$ denoting the area form oriented by $\in_{12}\, =1$:
\begin{equation*}
\begin{gathered}
    \psi\cdot \phi:=\delta^{ab}\psi_a\phi_b,\quad \psi\wedge\phi:=\in^{ab} \psi_a \phi_b,\quad (\psi\hot\phi)_{ab}=\psi_a\phi_b+\psi_b\phi_a-\delta_{ab}\psi\cdot\phi,\\
    \divh \psi:=\delta^{ab}\, \nabh_a \psi_b,\quad \curlh \psi:=\, \in^{ab} \nabh_a \psi_b,\quad (\nabh\hot \psi)_{ab}:=\nabh_a \psi_b+\nabh_b \psi_a-\delta_{ab}\, \divh\psi.
    \end{gathered}
\end{equation*}

{\bf Linearized quantities.} We define the following linearized quantities
\begin{equation}\lab{eq:linearized-quantities}
\thc:= \trth-2r^{-1},\quad \Kc:=K-r^{-2},\quad \ao:= \ah-1.
\end{equation}

{\bf Norms.} We have the following definition of the weighted Sobolev norms on $S_r$ spheres.
\begin{definition}\lab{def:L2-Hs-S_r}
For $S_r$-tangent covariant rank-$k$ tensors $U_{a_1\cdots a_k}$,
we denote by $L_{\ga}^2(S_r)$ the $L^2$ space with respect to the metric $\ga$, and by $\H_{\ga}^s(S_r)$ the Sobolev spaces for positive integers $s$, defined through $r\nabh$, i.e.,
\begin{equation*}
\| U \|_{\H_\ga^s(S_r)}:= \sum_{i\leq s} \|(r\nabh)^i U \|_{L_\ga^2(S_r)}.
\end{equation*}
We also denote for simplicity $\H_0^s:=\H_{\gz}^s$,  with  $\gz:= r^2 \gacr$.
\end{definition}

\subsection{Spacetime quantities}
Consider a spacetime with a choice of null frame $\{e_3,e_4,e_a\}_{a=1,2}$ at each point, where $e_3$ and $e_4$ are transveral null pairs with $\g(e_3,e_4)=-2$, and $\{e_a\}_{a=1,2}$ are orthgonal to both $e_3$ and $e_4$. 
We recall the spacetime geometric quantities:
\begin{equation*}
    \chi_{ab}=\g(\D_a e_4,e_b),\quad \chib_{ab}=\g(\D_a e_3,e_b),      \quad  \eta_a=\frac 12 \g(\D_3 e_4,e_a), 
\end{equation*}
\begin{equation*}
 \etab_a=\frac 12 \g(\D_4 e_3,e_a), \quad \zeta_a=\frac 12 \g(\D_a e_4,e_3),\quad \om=\frac 14 \g(\D_4 e_4,e_3),
\end{equation*}
\begin{equation*}
 \omb=\frac 14 \g(\D_3 e_3,e_4),\quad  \xi_a=\frac 12 \g(\D_4 e_4,e_a), \quad \xib_a=\frac 12 \g(\D_3 e_3,e_a),
\end{equation*}
\begin{equation*}
    \a_{ab}=\W_{a4b4},\quad \b_a=\frac 12 \W_{a434},\quad \rho=\frac 14 \W_{3434},
    \end{equation*}
    \begin{equation*}
     \dual\rho=\frac 14\dual \W_{3434},\quad \bb_a=\frac 12 \W_{a334},\quad \underline \a_{ab}=\W_{a3b3}.
\end{equation*}
Here $\W$ is the Weyl tensor that can be expressed as, with $\R$ the spacetime Riemann curvature tensor,
\begin{equation}\lab{eq:Weyl-tensor}
\bsplit
\W_{\rho\sigma\mu\de} &=  \R_{\rho\sigma\mu\de}
	+	\frac{1}{2} \left( \g_{\rho\mu} \Ricc_{\de\sigma} - \g_{\rho\de} \Ricc_{\mu\sigma} - \g_{\sigma\mu} \Ricc_{\de\rho} + \g_{\sigma\de} \Ricc_{\mu\rho} \right) \\
	& \quad + \frac{1}{6} \R_g \left( \g_{\rho\mu} \g_{\de\sigma} - \g_{\rho\de} \g_{\mu\sigma} \right).
	\end{split}
\end{equation}
We also define the following renormalized curvature quantities, first introduced in \cite{KR05}:
\begin{equation}\lab{def:rhoc-rhodc}
\widecheck{\rho}:=\rho-\frac 12\chih\cdot\chibh,\quad \dual\widecheck{\rho}:=\rhod-\frac 12\chih\wedge\chibh.
\end{equation}
They verify the improved propagation  equations\footnote{Note the  absence of $\a, \aa$ in those equations.\lab{foot;a-aa}} \eqref{eq:nab3P} and \eqref{eq:nab4bb}  and as such  they   played an important  role in many works, including \cite{LR15}, \cite{LR17}, \cite{AnLuk}, \cite{Shen23}.

\begin{proposition}\lab{prop:b-bb-expression}
For $\Si$ embedded in a spacetime $(\MM,\g)$ with a specified $r$-foliation,  the following relations hold true between 
the  intrinsic quantities   defined in Section \ref{sec:geometric-quantities-Si} and the spacetime quantities  defined above:
\begin{align}
\lab{eq:b+bb}
(\b+\bb)_a &= 2\left(Y+\Xi \cdot \k-\trt \Xi\right)_a+3\Ricc_{Na},\\
\lab{eq:b-bb}
(\b-\bb)_a &= -2\left(\nabh \Pi-\nabh_N \Xi-2\th\cdot \Xi+p\cdot  \Th-\Pi p \right)_a+(\slashed{\CC}_{Mom})_a,\\
\lab{eq:Gauss-rho}
\rho &= -K -\frac 14 (\trt)^2+\frac 14 (\trth)^2 +\frac 12 |\kh|^2-\frac 12 |\thh|^2+\frac 12 \CC_{Ham} \\
\nonumber & \quad -\Big(\Ricc-\frac 12 (\R_{\g}) \g\Big)_{NN}+\frac 12 \Big(\Ricc-\frac 12 (\R_{\g}) \g\Big)_{aa} -\frac 23 \R_{\g},\\
\lab{eq:curl-kn-dual-rho}
\dual\rho&= -\curlh \Xi- \kh\wedge\thh.
\end{align}
\end{proposition}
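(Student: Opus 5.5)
We prove Proposition \ref{prop:b-bb-expression} by a direct frame computation. The null frame adapted to $\Si$ and the $r$-foliation is built from the future unit normal $T$ of $\Si$ and the unit normal $N$ to $S_r$:
\[
e_4 = T+N,\qquad e_3 = T-N,\qquad \{e_a\}_{a=1,2}\ \text{tangent to } S_r .
\]
Then $\g(e_3,e_4)=-2$. With the convention $k(X,Y)=\g(\D_X T,Y)$ (sign fixed to match \eqref{ece}), we have $\Th_{ab}=k_{ab}$, $\Xi_a=k_{Na}$, $\Pi=k_{NN}$. The second fundamental form of $S_r\subset\Si$ is $\th$, and $\nab_N N$ has tangential part $p$.

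\textbf{Step 1: curvature.} Express the null curvature components through the spacetime Riemann tensor of $\Si$ in the mixed directions $T,N,e_a$, using \eqref{eq:Weyl-tensor}.
\begin{itemize}
\item \emph{$\b\pm\bb$.} Expanding $\W_{a434}$ and $\W_{a334}$ with $e_{3,4}=T\mp N$ shows that $\b+\bb$ is a multiple of $\W_{aNTN}$-type terms together with $\W(e_a,T,T,N)$, while $\b-\bb$ involves $\W(e_a,T,N,T)$ or $\W(e_a,N,e_b,e_b)$. The signs must be tracked carefully.
\item \emph{Purely spatial components.} The Gauss equation for $\Si\subset\MM$ gives $\R_{ijkl}=R_{ijkl}+k_{ik}k_{jl}-k_{il}k_{jk}$. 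This converts $\R(N,e_b,e_b,e_a)$ into $Y_a+(\Xi\cdot\Th-\trt\,\Xi)_a$.
\item \emph{Components with one $T$.} The Codazzi equation gives $\R(T,e_a,N,e_b)$-type terms as $\nab_i k_{jl}-\nab_j k_{il}$. Decomposing $\nab_a k_{NN}$ and $\nab_N k_{Na}$ along the foliation via $\th$ and $p$ produces $\nabh\Pi-\nabh_N\Xi-2\th\cdot\Xi+p\cdot\Th-\Pi p$. The leftover trace combination $\nab^jk_{ja}-\nab_a\tr k$ is exactly the momentum constraint, recorded as $\slashed{\CC}_{Mom}$.
\end{itemize}
The Ricci corrections in \eqref{eq:Weyl-tensor} yield the $3\Ricc_{Na}$ term. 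Using the Hamiltonian and momentum constraint expressions is not necessary, since these are written as error terms.

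\textbf{Step 2: $\rho$.} Write $\rho=\frac14\W_{3434}=\W(T,N,T,N)$ up to sign, and trade it for the $S_r$-tangential double trace $\W_{abab}$ using tracelessness of $\W$. Apply the Gauss equation twice:
\begin{itemize}
\item for $\Si\subset\MM$, to bring in $k$;
\item for $S_r\subset\Si$, namely $R_{abab}=2K-(\trth)^2+|\th|^2$, to bring in $K$ and $\th$.
\end{itemize}
This produces $-K-\frac14(\trt)^2+\frac14(\trth)^2+\frac12|\kh|^2-\frac12|\thh|^2$. The scalar curvature is collected into $\frac12\CC_{Ham}$, and the Einstein-tensor terms come from \eqref{eq:Weyl-tensor}.

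\textbf{Step 3: $\dual\rho$.} Since $\dual\rho=\frac12\in^{ab}\W_{ab34}$, it reduces to $\in^{ab}\R(e_a,e_b,T,N)$; the Ricci terms drop out by antisymmetry. Apply Codazzi for $\Si\subset\MM$ to get $\in^{ab}\nab_a k_{bN}$. Then decompose $\nab_a k_{bN}=\nabh_a\Xi_b-\th_{ac}\Th_{cb}+\th_{ab}\Pi$; the symmetric term drops out under $\in^{ab}$. This leaves $-\curlh\Xi-\kh\wedge\thh$, since the trace parts of $\th$ and $\Th$ commute.

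\textbf{Main obstacle.} The only real difficulty is bookkeeping of signs and normalisations, particularly in \eqref{eq:b-bb} and \eqref{eq:Gauss-rho}. To control this, we would first verify each identity against Minkowski space with a boosted slice, and against Schwarzschild ($\rho=-2m/r^3$ with $k=0$), before trusting the general formula. The details are in \cite{CK25}.
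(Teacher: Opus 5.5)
Your plan is the same Gauss--Codazzi computation in the frame $e_4=T+N$, $e_3=T-N$ that the paper simply delegates to \cite[Proposition 2.19]{CK25}, and it is sound, but your first bullet mixes up the components. In fact $\b+\bb=2\W(e_a,T,T,N)$ has two $T$'s, so you must first convert it by a trace identity into the spatial contraction $\sum_b\R(N,e_b,e_b,e_a)$, up to Ricci terms, before Gauss applies; while $\b-\bb=2\W(e_a,N,T,N)$ has one $T$ and is handled directly by Codazzi; and the normalization is $\dual\rho=\frac14\in^{ab}\W_{ab34}=\W(e_1,e_2,T,N)$, not $\frac12\in^{ab}\W_{ab34}$. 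Also, \eqref{eq:Weyl-tensor} as printed is not trace-free, so the stated non-vacuum coefficients (e.g.\ $3\Ricc_{Na}$ in \eqref{eq:b+bb}) are reproduced only by converting with $\g^{\alpha\gamma}\R_{\alpha\beta\gamma\delta}=\Ricc_{\beta\delta}$ and then inserting \eqref{eq:Weyl-tensor}, not via the tracelessness of $\W$ you invoke in Step 2 (a genuinely trace-free Weyl tensor gives $\Ricc_{Na}$ there); this is immaterial in the vacuum setting where the proposition is actually used.
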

\begin{proof}
See Proposition 2.19 in \cite{CK25}.
\end{proof}

\subsection{The null  structure and null  Bianchi equations}\lab{sec:null-str-Bianchi-equations}
We recall the null structure and Bianchi equations first derived in \cite{C-K}. The following equations are their special form in the double null foliation, see more details in Section \ref{sec:setup-Si-doublenull}.
\begin{align}
\lab{eq:e4Raych}
\nabh_4 \trch&=-\frac 12(\trch)^2 -|\chih|^2-2\om \trch,\\
\lab{eq:Codazzi-chih}
\divh \chih &= \frac 12 \nabh\trch-\frac 12 (\eta-\etab)\cdot (\chih-\frac 12\trch)-\b,\\
\lab{eq:e3Raych}
\nabh_3 \trchb &= -\frac 12 (\trchb)^2 -|\chibh|^2 -2\omb\trchb,\\
\lab{eq:Codazzi-chibh}
\divh \chibh &= \frac 12 \nabh\trchb+\frac 12 (\eta-\etab)\cdot (\chibh-\frac 12\trchb)+\bb,
\end{align}
\begin{align}
\nabh_4 \eta &=-\chi\cdot (\eta-\etab)-\b,\\
\nabh_3 \etab &=-\chib\cdot (\etab-\eta)+\bb,\\
\nabh_4 \omb &=2\om\omb+\frac 34 |\eta-\etab|^2-\frac 14 (\eta-\etab)\cdot (\eta+\etab)-\frac 18 |\eta+\etab|^2+\frac 12 \rho,\\
\nabh_3\om &=2\om \omb+\frac 34|\eta-\etab|^2+\frac 14 (\eta-\etab)\cdot (\eta+\etab)-\frac 18 |\eta+\etab|^2 +\frac 12 \rho,
\end{align}
\begin{align}
\lab{eq:nab3b}
\nabh_3 \b&= \nabh\rhoc+\dual\nabh\rhodc+(2\omb-\trchb)\b+2\chih\cdot\bb+3(\eta\rhoc+\dual\eta\rhodc)\\
&\notag \quad +\frac 12 (\nabh(\chih\cdot\chibh)+\dual\nabh(\chih\wedge\chibh))+\frac 32 (\eta\chih\cdot\chibh+\dual\eta\chih\wedge\chibh), \\
\lab{eq:nab4P}
\nabh_4(\rhoc,\rhodc) &= (\divh\b,-\curlh\b)-\frac 32 \trch (\rhoc,\rhodc)+((\zeta+2\etab)\cdot\b,-(\zeta+2\etab)\wedge\b)\\
&\notag \quad +\left(-\frac 12 \chih\cdot (\nabh\hot\etab+\etab\hot\etab),-\frac 12 \chih\wedge(\nabh\hot\etab+\etab\hot\etab)\right),\\
\lab{eq:nab3P}
\nabh_3 (\rhoc,\rhodc) &= -(\divh\bb,\curlh\bb)-\frac 32 \trchb (\rhoc,\rhodc)+((\zeta-2\eta)\cdot\bb,(\zeta-2\eta)\wedge\bb)\\
&\notag \quad +\left(-\frac 12 \chibh\cdot (\nabh\hot\eta+\eta\hot\eta),\frac 12 \chibh\wedge(\nabh\hot\eta+\eta\hot\eta)\right),\\
\lab{eq:nab4bb}
\nabh_4 \bb &= -\nabh\rhoc+\dual\nabh\rhodc+(2\om-\trch)\bb+2\chibh\cdot \b+3(-\etab \rhoc+\dual\etab\rhodc) \\
&\notag \quad -\frac 12 (\nabh (\chih\cdot\chibh)-\dual\nabh(\chih\wedge\chibh))-\frac 32(\etab\chih\cdot\chibh-\dual\etab \chih\wedge\chibh).
\end{align}
Note that we consider the renormalized quantities $\rhoc$ and $\rhodc$ instead of $\rho$ and $\rhod$. As a result, the quantities $\a$ and $\aa$ are decoupled from the system.

\section{Construction of the Cauchy data}\lab{sec:construction-Cauchy-data}

\subsection{The short pulse}

In this section, we provide a direct construction of the short-pulse type Cauchy data. We recall the local existence result obtained in \cite{CK26}:
\begin{definition}\lab{def:almost-round-sphere-data}
We refer to the following as a set of canonical sphere data of radius $r_0$:
\begin{itemize}
\item A Riemannian $2$-sphere $(S,\ga)$, with $\ga=r_0^2 e^{2u} \gacr$ in some coordinates $(\vth^A)$;
\item The radial expansion $\trth$ and the expansion in the time direction $\trt$;
\item The $1$-form $\Xi$ on $(S,\ga)$ with $\divh \Xi=0$.
\end{itemize}
For a given nonnegative integer $i$, we say a set of canonical sphere data $(S,u,\vth^A, \trth,\trt, \Xi)$ is $O_{i}(\eps_0)$-almost round with radius $r_0$, if the data satisfies the estimate
\beaa
r_0^{-1} \| u \|_{\H_0^{i}(S)} \lesssim \eps_0,\quad \| \trth-2r_0^{-1}, \trt,\Xi \|_{\H_0^{i-1}(S)} \lesssim \eps_0. 
\eeaa
Here, the $\H_0^i$ norms are defined using the round metric in the coordinates $(\vth^A)$ as in Definition \ref{def:L2-Hs-S_r}.
\end{definition}

\begin{theorem}[Local existence, {\cite[Theorem 3.4]{CK26}}]\lab{thm:local-existence-precise}
Given $s\geq 3$, consider a set of $O_{s+2}(\eps_0)$-almost round canonical sphere data $(S,u,\vth^A,\trth,\trt,\Xi)$ of radius $r_0$, as in Definition~\ref{def:almost-round-sphere-data}, with smallness parameter $\eps_0 \ll 1$.
Then, given four scalars $(\BB,\BBd, \KK,\KKd)$ on $(r_0,r_0+1)\times \mathbb{S}^2$,\footnote{As the statement is entirely local in $r$, it would suffice to prescribe them only on a sufficiently small interval near $r_0$.}      there exists
$\eps_{loc}>0$, depending on $r_0$ and $\sup_r \|\BB,\BBd, \KK,\KKd\|_{\H^s(S_r)}$, and a set of initial data $(g,k)$ on $(r_0,r_0+\eps_{loc})\times \mathbb{S}^2$ solving \eqref{ece}, such that the metric can be written as
\begin{equation}\lab{eq:metric-abu}
g=\ah^2 dr^2+ r^2 e^{2u} \gacr_{AB} (d\vth^A +b^A dr)(d\vth^B+b^B dr),
\end{equation}
and the data satisfies the following:
\begin{itemize}
\item The gauge conditions
\begin{itemize}
\item The $\ell=0$ conditions
\begin{equation}
\lab{eq:ell=1-condition-u-solution1}
\overline{\ao}=0,\quad \overline{\Pi}=-\frac 12 \overline{\trt}. 
\end{equation}
\item The $\ell=1$ conditions 
\begin{equation}\lab{eq:ell=1-condition-u-solution}
\int_{\mathbb{S}^2} e^{2u} x^i =0,\quad (e^{2u} \curlh b)_{\ell=1}=0;
\end{equation}
\item The conditions on the gauge scalars defined in \eqref{eq:def-mu}-\eqref{eq:def-nu}
\begin{equation}
\lab{eq:ell=1-condition-u-solution3}
\mu_{\ell\geq 1}=\nu=0.
\end{equation}
\end{itemize}
\item The free scalar conditions \eqref{eq:prescribed-scalar-conditions}.
\end{itemize}
Moreover, for any positive $\eps_{loc}'<\eps_{loc}$ and a set of initial data $(g',k')$ defined on $(r_0,r_0+\eps_{loc}')\times \mathbb{S}^2$ solving \eqref{ece} and verifying    the same conditions  as $(g, k)$, we have $(g,k)|_{(r_0,r_0+\eps_{loc}')\times \mathbb{S}^2}=(g',k')$.
\end{theorem}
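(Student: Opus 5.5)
The plan is to recast the constraint system \eqref{ece}, written for the metric ansatz \eqref{eq:metric-abu}, as a coupled transport--elliptic system in $r$, following \cite{CK25}. The transport (radial) equations along $N$ govern $\trth$, $\thh$, $\trt$, $\Th$, $\Xi$, $\Pi$ and $u$. The unknowns $\ao$, $b$ and the remaining parts of $(\Pi,\Xi)$ are instead recovered on each $S_r$ by 2D elliptic equations. The inputs to these elliptic equations are of three kinds:
\begin{itemize}
\item the gauge conditions $\mu_{\ell\ge1}=0$ and $\nu=0$, using \eqref{eq:def-mu}, \eqref{eq:def-nu} and Lemma \ref{lem:loga};
\item the $\ell\geq2$ free scalar conditions \eqref{eq:prescribed-scalar-conditions}, which fix the $\ell\ge 2$ parts of $\divh Y$, $\curlh Y$, $\laph(\Pi+\frac12\trt)$ and $r^{-4}\ah N(r^4\curlh\Xi)$;
\item the $\ell\le1$ normalizations \eqref{eq:ell=1-condition-u-solution1}--\eqref{eq:ell=1-condition-u-solution}, which remove the kernels of the spherical operators.
\end{itemize}
The Codazzi-type equation $\div k-\nab\tr k=0$ splits into a radial part, a transport equation for $\Pi$ via $\nabh_N$, and a spherical part. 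The spherical part, after taking $\divh$ and $\curlh$, is combined with the free scalars $\KK,\KKd$. The Hamiltonian constraint together with the Gauss equation for $S_r$ yields the transport equation for $\trth$, with $\ao$ determined from $\mu$.

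Next I would set up a Picard iteration on a slab $(r_0,r_0+\eps)\times\mathbb{S}^2$. The iteration starts from the data extended trivially in $r$, i.e.\ $u$, $\trth-2r^{-1}$, $\trt$, $\Xi$ constant in $r$ after rescaling. At step $n+1$, I first solve the elliptic equations on each sphere with coefficients from step $n$. This gives $\ao$, $b$, $\thh$ (through $\divh\thh$, $\curlh\thh$ from Codazzi and $Y$) and the $\ell\ge2$ parts of $\Pi$ and $\curlh\Xi$. I then integrate the transport equations in $r$ starting from the canonical sphere data.

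Elliptic estimates on the almost round spheres give control in $\H^{s+1}$ or $\H^{s+2}$ from $\H^{s}$ or $\H^{s-1}$ sources. This works because $\ga$ stays $O(\eps_0)$-close to $\gz$ in $\H^{s+2}$, and $s\ge 3$ makes the spheres uniformly elliptic with Sobolev product estimates. The free scalars enter linearly with a large norm $A:=\sup_r\|(\BB,\BBd,\KK,\KKd)\|_{\H^s}$. Transport over a length $\eps$ then produces bounds of the form $\eps_0+\eps\,C(r_0,A)$. Choosing $\eps_{loc}$ small in terms of $r_0$ and $A$ closes uniform bounds in a ball of a suitable norm, which is the same top-order norm as in Definition \ref{def:almost-round-sphere-data}, shifted by one for the elliptic variables. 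Differences of successive iterates satisfy the same system linearized, with a contraction factor $\eps\,C(r_0,A)$ in a norm one derivative lower. The limit solves the full system. A separate check shows that \eqref{ece} holds and not just the reduced system: the constraint violations satisfy a homogeneous transport system with zero data on $S_{r_0}$ and therefore vanish. Uniqueness on any shorter interval $(r_0,r_0+\eps'_{loc})$ follows from the same difference estimate applied to $(g,k)$ and $(g',k')$.

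The main obstacle is the loss of derivatives between the transport and elliptic parts. In particular, $\trth$ and $u$ are needed at one derivative higher than the transport equation naively provides, because $K$ and $\laph\log\ah$ appear in $\mu$. Similarly, $\Xi$ must be controlled through $\curlh\Xi$ via the $\KKd$ equation, which contains $N(\curlh\Xi)$. I would first try the structure used in \cite{CK25}: impose the gauge $\mu_{\ell\ge1}=0$ so that $\ao$ absorbs the top-order curvature term. Then derive the transport equation for the renormalized quantity $\mu$ (or for $\trth$ coupled with $K$), in which the top-order terms cancel, so that the elliptic equation recovers the full regularity. A second delicate point is handling the $\ell\le1$ modes consistently, via \eqref{eq:ell=1-condition-u-solution1}--\eqref{eq:ell=1-condition-u-solution}, so that the elliptic problems are solvable at each step.
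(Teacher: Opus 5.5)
\textbf{Verdict: genuine gap.} Your architecture is the one behind \cite{CK26}, whose reduced system is recorded in Appendix~\ref{appendix:a-de-inverse}: a Picard iteration on a transport--Hodge reduction of \eqref{ece}, closed on an interval whose length depends on the size of the free scalars, with $\mu_{\ell\ge1}=0$ letting $\ao$ absorb the top-order terms of the $\trth$ equation \eqref{eq:thc}. However, the reduction you actually write down is not the right one.

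\begin{itemize}
\item The normal component of $\div k-\nab\tr k=0$ contains no $N\Pi$, because it cancels against $N(\Pi+\trt)$. That component is the transport equation \eqref{eq:trt} for $\trt$, and $\Pi$ is purely elliptic, via \eqref{eq:estimate-Pi}--\eqref{eq:average-Pi-estimate}.
\item The constraints contain no equation for $N\thh$ or $N\Thh$. Transporting $\thh$ would need the traceless part of $R(N,\cdot,N,\cdot)$, which is neither constrained nor free data. So $\thh$ and $\Thh$ must come from the Codazzi systems \eqref{eq:d1-d2-thh-estimate}, \eqref{eq:d1-d2-Thh-estimate}, as you partly say later.
\item \eqref{eq:prescribed-scalar-conditions} fixes only the $\ell\ge2$ modes. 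Solvability of these $\d_1\d_2$ systems and of $\d_1 Y$ therefore requires $\ell\le1$ corrections $(\Bb_{\ell\le1},\Bbd_{\ell\le1})$ and $(\Kk_{\ell\le1},\Kkd_{\ell\le1})$, which are unknowns of the iteration. The normalizations \eqref{eq:ell=1-condition-u-solution1}--\eqref{eq:ell=1-condition-u-solution} only remove kernels; they do not supply these cokernel conditions.
\end{itemize}

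The decisive gap is the derivative loss you flag but do not resolve. Suppose $u$ is transported through the trace of the first variation, $2\pa_r u=\ah\trth-2r^{-1}+\divh b$, with $b$ from \eqref{eq:nab-hot-b-estimate}. Both terms on the right are only at the level of $\trth,\thh\in\H^{s+1}$. Hence $u\in\H^{s+1}$, $\Kc\in\H^{s-1}$ and $\ao\in\H^{s+1}$. The $\slashed{d}(\log\ah)\cdot\slashed{d}(\log\ah)$ term in \eqref{eq:thc} then drops $\thc$ to $\H^s$, and the scheme loses a derivative at every step. The top orders of $\ah\trth$ and $\divh b$ do cancel, but only nonlocally through Codazzi. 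Your suggestion of transporting ``$\mu$, or $\trth$ coupled with $K$'' does not say how to use this, and $\mu$ is gauge-fixed rather than transported.

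The missing idea is to transport the Gauss curvature itself, via \eqref{eq:K}.
\begin{itemize}
\item In \eqref{eq:K} the $\laph\trth$ terms cancel by Codazzi, and $\divh Y$ is replaced by the free scalar $\BB$ plus $\Bb_{\ell\le1}$. This gives $\Kc\in\H^s$.
\item You then recover $u\in\H^{s+2}$ from the linearized uniformization equation \eqref{eq:laph-u-estimate-part}. Its $\ell=1$ kernel is precisely what $\int e^{2u}x^i=0$ removes. Likewise $\ao\in\H^{s+2}$ follows from $\mu_{\ell\ge1}=0$.
\item The $\ell=1$ part of $b$ is fixed by \eqref{eq:div-b-ell=1}--\eqref{eq:curl-b-ell=1}. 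The first of these is exactly the condition that keeps $\int e^{2u}x^i=0$ compatible with the trace of the first variation of $\ga$.
\end{itemize}
Finally, the consistency step is to identify the limit's unknowns with the geometric quantities of the metric \eqref{eq:metric-abu}. Your generic claim that ``constraint violations satisfy a homogeneous transport system'' does not substitute for this step.
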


Note that the theorem only ensures the local existence, i.e., $\eps_{loc}$ should be viewed as a very small number. In particular, one should expect $\eps_{loc} a\ll 1$, where 
\begin{equation}\lab{eq:def-a}
a:= \sup_r \| \BB_{sp},\BBd_{sp}, \KK_{sp},\KKd_{sp} \|^2_{\H_0^s(S_r)}.
\end{equation}
Recall that the result of An--Luk \cite{AnLuk} and many subsequent works allows the scale-critical result, i.e., $\de:=\eps_{loc}$ can be arbitrarily small and independent of $a$. Our local existence result fits well into this regime. To cover the original case $a\approx \de^{-1}$ of Christodoulou, we need to make some modifications, see Appendix \ref{appendix:a-de-inverse}. Here we focus on the case where one has $\de a\ll 1$.

\begin{proposition}\lab{prop:bounds-local-existence}
For $(\BB_{sp},\BBd_{sp},\KK_{sp},\KKd_{sp})$ and $a>0$ defined in \eqref{eq:def-a}, the solution obtained in Theorem \ref{thm:local-existence-precise} also satisfies the following estimates:
\begin{equation*}
\begin{gathered}
\sup_{r\in (1,1+\de)} \| \Kc \|_{\H_0^s(S_r)} \lesssim \eps_0+\de a^\frac 12,\quad \sup_{r\in (1,1+\de)} \| \trth, \trt \|_{\H_0^{s+1}(S_r)}  \lesssim \eps_0+\de a, \\
\quad \sup_{r\in (1,1+\de)} \| \Xi ,\slashed{d} (\log\ah) \|_{\H_0^{s+1}(S_r)}  \lesssim \eps_0+\de a^\frac 12, \quad
\sup_{r\in (1,1+\de)} \| \thh,\Thh, \Pi, Y \|_{\H_0^s(S_r)}  \lesssim \eps_0+a^\frac 12.
\end{gathered}
\end{equation*}
\end{proposition}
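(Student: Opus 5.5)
Our plan is to treat the statement as a bootstrap/a priori estimate on the interval $(1,1+\de)$, with $\de=\eps_{loc}$ and $\de a\ll 1$, using the structure of the transport--elliptic system from \cite{CK25,CK26}. The solution from Theorem \ref{thm:local-existence-precise} is obtained by iteration. We first assume the bounds of the proposition with constant $2C$, valid on a subinterval, and then improve them to constant $C$. The key principle is that the free scalars, of size $a^{\frac 12}$, enter the system only through the elliptic equations \eqref{eq:prescribed-scalar-conditions}. These equations determine $(Y,\Pi+\frac12\trt,\curlh\Xi)$ on each sphere. All other quantities are obtained by integrating transport equations in $r$ over an interval of length $\de$, and this is what produces the gains $\de a^{\frac 12}$ or $\de a$.

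The first step handles the quantities that are large at order one, namely $\thh,\Thh,\Pi,Y$. The conditions $(\divh Y-\BB)_{\ell\ge2}=0$ and $(\curlh Y-\BBd)_{\ell\ge2}=0$ form a Hodge system for the $1$-form $Y$. Its $\ell\le1$ part is controlled by the transport equations and the gauge conditions. Elliptic estimates on $S_r$ therefore give $\|Y\|_{\H_0^{s}}\lesssim \|\BB,\BBd\|_{\H^{s-1}}+\eps_0+\de(\cdots)\lesssim a^{\frac12}+\eps_0$. Similarly, $\laph(\Pi+\frac12\trt)=\KK$ modulo $\ell\le1$ gives $\Pi$ with two derivatives gained. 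The Codazzi-type equations $\divh\thh=\frac12\nabh\trth-Y+\ldots$ and $\divh\Thh=\nabh\trt-\nabh\Pi+\ldots$ (the momentum constraint projected onto $S_r$) then bound $\thh,\Thh$ by $\eps_0+a^{\frac12}$ in $\H_0^s$.

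The second step treats the small quantities through transport in $r$. The relation $\ah N\trth = -\frac12(\trth)^2-|\thh|^2 - \ldots$ has a source $|\thh|^2\lesssim a$. Integrating from $r=1$, where the data is trivial, gives $\|\trth-2r^{-1}\|\lesssim \eps_0+\de a$, and likewise for $\trt$, whose source is $\thh\cdot\Thh$. The $\H_0^{s+1}$ regularity comes from the gauge choice $\mu_{\ell\ge1}=0$ and from commuting with the transport equations. For $\Xi$, the condition $\nu=\divh\Xi=0$ together with $r^{-4}\ah N(r^4\curlh\Xi)=\KKd$ on $\ell\ge2$ shows that $\curlh\Xi$ is the integral of $\KKd$ over length $\de$. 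This gives $\de a^{\frac12}$ with one extra derivative. The condition $\mu_{\ell\ge1}=0$ reads $\laph\log\ah = K-\frac14(\trth)^2$ modulo constants. It therefore yields $\slashed d\log\ah$ in $\H^{s+1}$ from $\Kc$ in $\H^s$ and from $\trth$. Finally, the Gauss curvature evolves via $N(K)$, which involves $\thh\cdot\nabh^2$-type terms linear in $\thh$ and $\slashed d\log\ah$. Integrating over length $\de$ gives $\Kc\lesssim\eps_0+\de a^{\frac12}$.

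We expect the main obstacle to be the loss of derivatives in the transport equations for $K$ and $\trth$. Their sources, for instance $\divh\divh\thh$ or $\nabh$ of $Y$, are only at the level $\H^{s-2}$ with size $a^{\frac12}$. The plan is to absorb these terms using the elliptic gauge conditions $\mu_{\ell\ge1}=0$ and the $\ell\ge2$ free-scalar equations, following the renormalizations used in \cite{CK26}. In this way only $\BB$ itself, which is controlled in $\H^s$, appears. We must also check that the nonlinear terms, which are quadratic in the large quantities, always carry at least one factor of $\de$, so that they close under $\de a\ll1$.
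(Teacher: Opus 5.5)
Your outline follows the paper's argument. You rerun the iteration of \cite{CK26} as a bootstrap on $(1,1+\de)$. The sphere Hodge systems sourced by the free scalars give $Y,\thh,\Thh,\Pi$ of size $a^{\frac12}$. Integrating sources of size $a$ (namely $-\frac12(|\thh|^2+|\Thh|^2)$ and $-\thh\cdot\Thh$) over length $\de$ gives $\de a$ for $\thc,\trt$, with $\mu_{\ell\ge1}=0$ absorbing the derivative loss. Replacing $\divh Y$ by $\BB_{sp}+\Bb_{\ell\le1}$ in the transport equation gives $\de a^{\frac12}$ for $\Kc$. Using the first-order Codazzi equation $\divh\thh=\frac12\nabh\trth-Y$ instead of the paper's $\d_1\d_2\thh$ form is harmless.

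What upgrades the $\de(a^{\frac12})^2$ of \cite{CK26} to $\de a^{\frac12}$ for $\Kc$ and $\curlh\Xi$ is that their equations contain no large$\times$large products. They contain only large$\times$small terms such as $\slashed{d}(\log\ah)\cdot Y$ and $\thh\cdot\nabh\hot\slashed{d}(\log\ah)$. You note this for $\Kc$, and it should be the explicit point. A ``factor of $\de$'' on terms quadratic in large quantities, as in your last sentence, would only give $\de a$.

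Two steps need repair.

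First, the $\Thh$ equation. The tangential momentum constraint reads $\divh\Thh=\frac12\nabh\trt+\nabh\Pi-\nabh_N\Xi+\ldots$. The term $\nabh_N\Xi$ is a radial derivative of size $a^{\frac12}$. It is not controlled by your sphere bound $\|\Xi\|\lesssim\eps_0+\de a^{\frac12}$, and it is precisely how $\KKd$ enters $\Thh$. So $\KKd$ does not enter ``only through elliptic equations''. You must bound $\nabh_N\Xi$ by a Hodge estimate from $\divh\Xi=0$ and $r^{-4}\ah N(r^4\curlh\Xi)=\KKd+\ldots$, including the commutator terms. Alternatively, as the paper does, work with $\d_1\d_2(\ah\Thh)=(\KK,-\KKd)+\ldots$, where both scalars appear directly.

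Second, the $\slashed{d}(\log\ah)$ bound. The condition $\mu_{\ell\ge1}=0$ gives $\laph(\log\ah)=(\Kc-r^{-1}\thc-\frac14\thc^2)_{\ell\ge1}$. Generically $\thc_{\ell\ge1}$ has size $\de a$, because $(|\thh|^2+|\Thh|^2)_{\ell\ge1}\sim a$. Your step therefore yields only $\|\slashed{d}(\log\ah)\|_{\H_0^{s+1}}\lesssim\eps_0+\de a$, not $\eps_0+\de a^{\frac12}$. The paper's short proof does not treat this component separately either. Removing the dependence of $\ao$ on $\thc$ is exactly what the appendix's gauge $\mu'=-\laph(\log\ah)+\Kc$ achieves.
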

\begin{proof}
This follows from re-examining the proof of Theorem \ref{thm:local-existence-precise}, which was given in \cite{CK26}, with $\de=\eps_{loc}$ and $a=M^2$ in the proof there. Indeed, through the iteration, we obtain the bounds of all geometric quantities. For instance, the equations for $\thc$, $\Kc$, and $\trt$ reads, respectively,
\begin{align}
\lab{eq:thc}
N \thc &= \mu -2r^{-1}\thc-2 r^{-2}\ao +\widetilde\Ga_1\cdot \widetilde\Ga_1, \\
\lab{eq:K}
N \Kc &= r^{-1} \mu- \divh Y-3r^{-1}\Kc -2 r^{-3} \ao+\Ga_1\cdot \Ga_2\\
&\notag \quad +2\, \slashed{d}(\log\ah)\cdot Y-\thh\cdot \left(\nabh\hot\, \slashed{d}(\log\ah)-\slashed{d}(\log\ah)\hot\, \slashed{d}(\log\ah)\right) ,\\
\lab{eq:trt}
N \trt&= \divh \Xi + 2 r^{-1} \Pi  - r^{-1} \trt +\widetilde\Ga_1\cdot \widetilde\Ga_1,
\end{align}
where the size of $\widetilde \Ga_1$ is at its worst $\sim a^\frac 12$ (in \cite{CK26} we denote $M=a^\frac 12$). This implies the estimates for $\thc$ and $\trt$.
In \cite{CK26} we obtained
\beaa
\| \Kc, \curlh\Xi \|_{\H_0^s(S_r)} \lesssim  \eps_0+\de (a^\frac 12)^2.
\eeaa
 This is not optimal in terms of the power of $a$; in fact, simply examining the right-hand side of the equation of $\Kc$ and $\curlh\Xi$, we deduce
\beaa
\| \Kc , \curlh \Xi \|_{\H_0^s(S_r)} \lesssim \eps_0+\de a^\frac 12 ,
\eeaa
and this yields the estimate as required.\footnote{We have kept the discussion brief; the more detailed bookkeeping, which also covers the case $\de a\sim 1$, is given in Appendix \ref{appendix:a-de-inverse}.}
\end{proof}

\begin{remark}
Note that in our construction, the derivatives of $\thh$ and $\Thh$ in the direction of $N$ are never used. In other words, one can allow $\thh$ and $\Thh$ to jump in the radial direction. This aligns with the observation in the characteristic setting that $\chih$ is allowed to jump in the outgoing null direction.\footnote{In fact, as we have already seen from the null structure equation, the curvature component $\a$, which is roughly the $e_4$ derivative of $\chih$, decouples from the system; see  footnote \ref{foot;a-aa}.} In our situation, one can in fact have $\thh$ and $\Thh$ jump at the same point, resembling an interaction of two transversal impulsive waves. In the characteristic setting, this is constructed by two transversal impulsive waves meeting each other in Luk--Rodnianski \cite{LR17}.
\end{remark}

\subsection{Extension to asymptotically flat data}
We recall the extension result in \cite{CK26}.\footnote{In \cite{CK26}, we state the result for the borderline decay, but pointed out in Remark 4.4 there that the proof works for all the rates stated here.}
\begin{theorem}[{\cite[Theorem 4.3]{CK26}}]
\lab{main-thm-CK26}
Fix a positive integer $s\geq 3$ and a small constant $\eps_0>0$. For given $O_{s+2}(\eps_0)$-almost round canonical sphere data $(S,u,\vth^A,\trth,\trt,\Xi)$ of radius $r_0$, as defined in Definition \ref{def:almost-round-sphere-data}, and prescribed scalars $(\BB,\BBd, \KK,\KKd)$ on $(r_0,\infty)\times \mathbb{S}^2$ satisfying the bounds
\begin{equation*}
\sup_{r\in (r_0,\infty)} r^{2+\si} \| \BB,\BBd,\KK,\KKd \|_{\H_0^s(S_r)} \leq \eps_0 ,
\end{equation*}
there exists a unique solution to \eqref{ece} on $(r_0,\infty)\times \mathbb{S}^2$ satisfying the conditions listed in  the local existence result of Theorem \ref{thm:local-existence-precise}. Moreover, for quantities defined in Section \ref{sec:geometric-quantities-Si}, we have the estimate
\begin{equation*}
|\ao, b, u|\lesssim \eps_0 r^{-\si},\quad |\thh, \thc, \trt, \Thh, \Xi, \Pi |\lesssim \eps_0 r^{-1-\si},\quad |\Kc, Y|\lesssim \eps_0 r^{-2-\si}.
\end{equation*}
\end{theorem}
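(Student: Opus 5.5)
The statement is \cite[Theorem 4.3]{CK26}, so the plan follows the continuity argument used there. The argument combines the local existence of Theorem \ref{thm:local-existence-precise} with a priori estimates that are uniform in $r$.

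\textbf{Bootstrap set-up.} Define $R^*$ as the supremum of radii $R>r_0$ for which a solution exists on $(r_0,R)\times\mathbb{S}^2$ satisfying:
\begin{itemize}
\item the gauge conditions \eqref{eq:ell=1-condition-u-solution1}--\eqref{eq:ell=1-condition-u-solution3};
\item the free scalar conditions \eqref{eq:prescribed-scalar-conditions};
\item the bootstrap bounds
\[
|\ao,b,u|\le C\eps_0 r^{-\si},\quad |\thh,\thc,\trt,\Thh,\Xi,\Pi|\le C\eps_0 r^{-1-\si},\quad |\Kc,Y|\le C\eps_0 r^{-2-\si},
\]
together with the corresponding $\H_0^s(S_r)$ versions (with $r\nabh$ derivatives).
\end{itemize}
Here $C$ is a large constant. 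Local existence shows that $R^*>r_0$, and the local uniqueness clause shows that the solutions on different intervals agree, so they glue together.

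\textbf{Improving the bounds.} The goal is to show that the bootstrap bounds hold with $C/2$ in place of $C$. I would split the unknowns into two groups.
\begin{itemize}
\item \emph{Transported quantities}: $\thc$, $\Kc$, $\trt$, $\curlh\Xi$, and the $\ell\le1$ parts of $u$ and $b$. These obey radial equations such as \eqref{eq:thc}, \eqref{eq:K} and \eqref{eq:trt}. Their linear parts have the form $N\psi+c\,r^{-1}\psi$. Multiply each equation by the integrating factor $r^{c}$ and integrate from $r_0$. The source terms are either prescribed free scalars of size $\eps_0 r^{-2-\si}$, or quadratic terms of size $C^2\eps_0^2 r^{-2-2\si}$. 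After multiplying by $r^c$, these sources are integrable and produce exactly the claimed weights.
\item \emph{Elliptic quantities}: $\thh$, $\Thh$, $Y$, and the $\ell\ge2$ parts of $\Pi$ and of $\log\ah$. These are recovered on each sphere $S_r$ from 2D Hodge systems. The data for these systems are the free scalars, through \eqref{eq:prescribed-scalar-conditions}, and the gauge conditions $\mu_{\ell\ge1}=\nu=0$, combined with the Codazzi-type constraint equations. Since $S_r$ is almost round, the elliptic estimates with $r\nabh$ derivatives hold with uniform constants. The $\ell=0$ parts are fixed by \eqref{eq:ell=1-condition-u-solution1}.
\end{itemize}

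\textbf{Main obstacle: the low modes.} I expect the hardest step to be the $\ell\le1$ modes, where the Hodge operators are not invertible. Their projections satisfy ODEs in $r$ that carry the mass and linear momentum. To handle them I would:
\begin{itemize}
\item use the conditions \eqref{eq:ell=1-condition-u-solution} to kill the gauge-dependent components;
\item show that the remaining components, such as $\overline{\trt}$ and the $\ell=1$ part of $\Xi$, satisfy ODEs whose inhomogeneities are integrable with the weight $r^{-1-\si}$.
\end{itemize}
The potentially unfavourable rates arise from the homogeneous solutions (of order $r^{-1}$ or $r^{-2}$) of these ODEs. Since $\si<1$, these homogeneous solutions are dominated by $\eps_0 r^{-\si}$, and this is why the restriction $\si\in[0,1)$ is needed. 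At the metric level, $u$ and $\ao$ are then of size $\eps_0r^{-\si}$. When $\si=0$, one must check that no logarithmic loss appears; this is done by using the specific structure $N\psi+2r^{-1}\psi$ rather than a crude Grönwall bound.

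\textbf{Continuation.} Once the bounds are improved to $C/2$ for $C$ large and $\eps_0$ small, suppose $R^*<\infty$. The sphere $S_{R^*}$, after rescaling by $R^*$, is $O_{s+2}(C\eps_0)$-almost round. Applying Theorem \ref{thm:local-existence-precise} at radius $R^*$ extends the solution past $R^*$, a contradiction; hence $R^*=\infty$. A separate point needs care here: the regularity index must not drop in this step, since the local theorem requires $O_{s+2}$ closeness while the free scalars are only controlled in $\H^s$. This is handled by the elliptic gain on $\thh$, $\Thh$ and on the transported quantities. Uniqueness follows directly from the uniqueness clause of Theorem \ref{thm:local-existence-precise}.
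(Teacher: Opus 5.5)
The paper does not reprove this result. It quotes \cite[Theorem 4.3]{CK26}, which the introduction describes as local existence combined with a global extension argument, and your continuity framework is consistent with that.

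\textbf{The gap: linear couplings.} The gap is in the step meant to upgrade $C$ to $C/2$. You assert that, besides $N\psi+c\,r^{-1}\psi$, the radial equations contain only free scalars and quadratic terms. This is false:
\begin{itemize}
\item \eqref{eq:thc} contains $-2r^{-2}\ao$;
\item \eqref{eq:K} contains $r^{-1}\mu_{\ell=0}$, $-2r^{-3}\ao$ and $-\divh Y$, whose $\ell\le1$ part is linear in $\thc$ by the $\ell\le1$ projection of \eqref{eq:d1-d2-thh-estimate};
\item \eqref{eq:trt} contains $2r^{-1}\Pi$.
\end{itemize}
These are linear couplings with scale-invariant coefficients. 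Neither the smallness of $\eps_0$ nor, on an unbounded interval, the length of the interval makes them perturbative, so feeding the bootstrap bounds into them gives no gain. For example, inserting $|\ao|\le C\eps_0r^{-\si}$ into \eqref{eq:thc} and integrating with the factor $r^2$ only gives $|\thc|\le\frac{2C}{1-\si}\eps_0 r^{-1-\si}$ plus data terms. Likewise, with the factor $r$ that \eqref{eq:trt} suggests as written, the term $2r^{-1}\Pi$ does not even produce the rate $r^{-1-\si}$ for $\trt$. That requires first substituting $\overline{\Pi}=-\frac12\overline{\trt}$ and the elliptic equation for $\Pi+\frac12\trt$.

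\textbf{What is missing: a hierarchy in the coupled linearized system.} After eliminating the sphere-determined quantities, you must order the unknowns (or compute indicial roots mode by mode). Subtract \eqref{eq:thc} from $r$ times \eqref{eq:K}: the $\mu$ and $\ao$ terms cancel, and $\psi:=r\Kc-\thc$ satisfies $N\psi+2r^{-1}\psi=-r\,\divh Y+\dots$. Since $\mu_{\ell\ge1}=0$ gives $\psi_\ell=r(\laph\ao)_\ell+\dots$, for each mode $\ell\ge2$ one obtains at the linear level
\[
N\ao_\ell+r^{-1}\ao_\ell=\tfrac{r^2}{\ell(\ell+1)}\BB_\ell,\qquad N\thc_\ell+2r^{-1}\thc_\ell=-2r^{-2}\ao_\ell,
\]
and then $\Kc_\ell=r^{-1}\thc_\ell+(\laph\ao)_\ell+\dots$. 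So the lapse must be estimated first, then $\thc$, then $\Kc$. Only then are the constants independent of $C$.

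This is also where $\si<1$ genuinely enters. For all $\ell\ge2$, not just the low modes, the metric-level homogeneous behaviour is of order $r^{-1}$ up to logarithms, and $\thc$ is resonantly forced, behaving like $r^{-2}\log r$. At $\ell=0$ the analogous resonance is the mass term $u\sim m\,r^{-1}\log r$ in this gauge. Hence the critical endpoint is $\si=1$, where a logarithm is lost, not $\si=0$ as your remark suggests.

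\textbf{Two smaller corrections.}
\begin{itemize}
\item $u$ and $b$ are not transported; they are determined on each sphere. $u$ comes from $-\laph_{\gacr}u-2u=r^2\Kc+\dots$, with its $\ell=1$ part fixed by \eqref{eq:ell=1-condition-u-solution}. $b$ comes from $\nabz\hot b^{\flat,0}=-2e^{-2u}\ah\thh$ together with the $\ell=1$ divergence and curl relations.
\item The restart should be done at some $r_1<R^*$, since the solution is not yet known on $S_{R^*}$. This works because $\eps_{loc}$ depends only on the radius and the size of the free scalars.
\end{itemize}
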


To establish the global(-in-$r$) existence, it suffices to verify the sphere data on $S_{1+\de}$, in view of Theorem \ref{main-thm-CK26}. We have obtained in Proposition \ref{prop:bounds-local-existence} the following bounds
\begin{equation*}
\Kc \sim \de a^\frac 12,\quad \trth, \trt \sim \de a,\quad \Xi \sim \de a^\frac 12.
\end{equation*}
Therefore, taking $\de$ such that $\eps_0:=\de a \ll 1$, we can apply Theorem \ref{main-thm-CK26}, using the scalars $(\BB_{ext},\BBd_{ext},\KK_{ext},\KKd_{ext})$, to obtain the asymptotically flat Cauchy data as stated.

\begin{remark}
[Extension to fast decaying data]
Note that the construction above only yields the decay
$g-\de=O(r^{-\si})$, $0<\si<1$. Without further asymptotic
cancellations, such a decay rate is not consistent with finite ADM charges. This contrasts with the data constructed by the gluing method
in \cite{LiYu}, for which the ADM charges are finite.
The obstruction can already be seen from forward integration in a
scalar model:
\begin{equation}\lab{eq:scalar-model}
\pa_r f+\la r^{-1} f=F .
\end{equation}
Then
\[f(r)=\left(\frac{r_0}{r}\right)^\la f(r_0)
+r^{-\la}\int_{r_0}^r (r')^\la F(r')\,dr' . \]
Thus the forward construction naturally leads to an $r^{-\la}$
homogeneous contribution. When the source decays sufficiently fast,
improving upon this decay requires a more sophisticated modulation argument. 

In the forthcoming work \cite{Gauge-scalars-preparation}, we construct
data with faster decay and relate this issue to the choice of gauge
scalars. More precisely, we show that an appropriate choice of the gauge
scalars makes the effective coefficient $\la$ in
\eqref{eq:scalar-model} favorable for the $\ell\geq 2$ components.
Consequently, the remaining obstructions are confined to the
$\ell\leq 1$ modes, in agreement with the finite-dimensional
obstructions appearing in the gluing construction.
\end{remark}

\section{The spacetime evolution}\lab{sec:spacetime-evolution}
We prove  Parts  \ref{statement:semi-global-existence} and \ref{statement:trapped-surface} of Theorem \ref{thm:short-pulse-data}   by  appealing to the corresponding   characteristic  data  results   in \cite{Chr1}, \cite{KRodn},  \cite{KLR}, \cite{An12}, \cite{AnLuk}, \cite{An19}. We follow, in fact,  our own framework in  \cite{CK-TS}. The strategy is very simple:    we extend the spacetime to $H_{-1}$, defined as the future outgoing null cone of $S_1$,  and show that the induced  characteristic data on $H_{-1}$  verify the assumptions  needed in   \cite{CK-TS}. We  distinguish  between the  small  triangular region $\MM_{\triangle}$---the domain of dependence  of  our  short pulse region---and  the long,  semi-global, ingoing region $\MM_{\CC}$   determined   from    $H_{-1}$ (see Figure \ref{figure:triangular region}).  In contrast to the geodesic foliation used in \cite{CK-TS}, we construct the spacetime region $\MM_{\triangle}$ using a double-null foliation.    This is due to the fact that, initially, in the  region  $\MM_{\triangle}$ both  $\nabh_3$ and $\nabh_4$ behave symmetrically,   like $\de^{-1}$ (and  $\nabh\sim 1$),    while in $\MM_{\CC}$    the symmetry breaks down.

\subsection{Set up of spacetime quantities on \texorpdfstring{$\Si$}{Si}}
\lab{sec:setup-Si-doublenull}

To set up the double null foliation, we initialize the outgoing and incoming optical function $u$ and $\ub$ on $\Si$ by $u=-r$, $\ub=r$. We first recall the null lapse function and the choice of the null pair in the double null foliation:
\beaa
\frac 12 \Om^{-2} = -\g (\grad\, u,\grad\, \ub),\quad e_3 := -2\Om\, \grad\, \ub,\quad e_4= -2\Om\, \grad\, u.
\eeaa
On $\Si$, we can use the null pair $T+N$, $T-N$ to compute
\beaa
\frac 12 \Om^{-2} = -\g (\grad\, u,\grad\, \ub)=\frac 12 (T-N)(u) (T+N)(\ub).
\eeaa
Note that $\{T-N,T+N\}$ determines, at each $r$-sphere on $\Si$, the two null directions orthogonal to the sphere. Therefore, $\{e_3,e_4\}$ is a rescale of $\{T-N,T+N\}$ everywhere on $\Si$. As a consequence, $(T-N)(\ub)|_\Si=0$, $(T+N)(u)|_{\Si}=0$. Hence we have
\beaa
\frac 12 \Om^{-2} = -2N(u) N(\ub)=2(N(r))^2=2\ah^{-2},\quad \text{hence $\Om=\frac 12 \ah$ on $\Si$.}
\eeaa
On the other hand, 
\beaa
\g(e_3,N)= -2\Om N(\ub)= -2\Om\, \ah^{-1} =-1.
\eeaa
Therefore, $e_3=T-N$ and $e_4=T+N$ on $\Si$. We then have the following spacetime Ricci coefficients determined on $\Si$:
\begin{equation*}
\chi_{ab}:=\g(\D_a e_4, e_b)=\th_{ab}+\Th_{ab}
, \quad \chib_{ab}:=\g(\D_a e_3, e_b)=\th_{ab}-\Th_{ab},\quad  \zeta_a:=\frac 12 \g(\D_a e_4,e_3)=-\Xi_a.
\end{equation*}
It also holds in the double null gauge that
\beaa
\eta_a=\zeta_a+\nabh_a(\log\Om),\quad \etab_a=-\zeta_a+\nabh_a(\log\Om), \quad \xi=\xib=0.
\eeaa
Since we also know $\nabh\log\Om$, we determine $\eta$, $\etab$ on $\Si$. 

To determine $\om$ and $\omb$, we recall the identities in the double null foliation $\om=-\frac 12 \nabh_4(\log\Om)$, $\omb=-\frac 12 \nabh_3(\log\Om)$. Therefore, on $\Si$, we have
 \begin{equation*}
 \om-\omb=-\nabh_N (\log\Om)=-\nabh_N (\log\ah),
\end{equation*}
On the other hand,
 \begin{equation*}
\om=\frac 14 \g(\D_4 e_4, e_3)= \frac 14 \g(\D_3 e_4,e_3) +\frac 14\g(\D_{2N} (T+N),T-N) = -\omb+\g(\D_N N,T)=-\omb-\Pi,
\end{equation*}
i.e., $\om+\omb=-\Pi$. Therefore,
 \begin{equation*}
\om=-\frac 12(\nabh_N (\log\ah)+\Pi),\quad \omb=\frac 12(\nabh_N (\log\ah)-\Pi).
\end{equation*} 
According to Proposition \ref{prop:b-bb-expression}, we have, in an Einstein-vacuum spacetime,
 \begin{equation*}
(\b+\bb)_a = 2\left(Y+\Xi \cdot \k-\trt \Xi\right)_a,\quad
(\b-\bb)_a = -2\left(\nabh \Pi-\nabh_N \Xi-2\th\cdot \Xi+p\cdot  \Th-\Pi p \right)_a,
\end{equation*}
 \begin{equation*}
\rhoc = -K -\frac 14 (\trt)^2+\frac 14 (\trth)^2  ,\quad  \rhodc= -\curlh \Xi.
\end{equation*}
As a consequence, in view of Proposition \ref{prop:bounds-local-existence}, we obtain the following bound, in terms of the $L^2$ norm over spheres on $\Si$, for up to $s+1$ angular derivatives: 
\begin{equation*}
\begin{gathered} \sup_{r\in (1,1+\de)} \| \chih,\chibh,\om,\omb \|_{\H^{s+1}(S_r)}  \lesssim a^\frac 12,\quad 
\sup_{r\in (1,1+\de)} \| \rhoc,\rhodc \|_{\H^s(S_r)} \lesssim \de a,\quad \sup_{r\in (1,1+\de)} \| \widecheck\trch, \widecheck\trchb \|_{\H^{s+1}(S_r)}  \lesssim \de a, \\
\quad \sup_{r\in (1,1+\de)} \| \zeta,\eta, \etab ,\slashed{d} (\log\Om) \|_{\H^{s+1}(S_r)}  \lesssim \de a^\frac 12, \quad
\sup_{r\in (1,1+\de)} \| \b, \bb \|_{\H^s(S_r)}  \lesssim a^\frac 12.
\end{gathered}
\end{equation*}
where $\rhoc$ and $\rhodc$ are defined in \eqref{def:rhoc-rhodc}, and 
\beaa
 \trchc:= \trch-2|u|^{-1},\quad \trchbc:= \trchb+2|u|^{-1}.
\eeaa

\subsection{Extension to the triangular region}\lab{sec:tri-region}

In this subsection, we establish the estimates in the spacetime region $\MM_{\triangle}$. The result of this subsection only requires $\de a \lesssim 1$. We define the following norms:
\begin{equation}\lab{eq:def-norms}
\begin{gathered}
\RR_{\leq N} := \sup_{u,\ub}\, \de^{-\frac 12} a^{-\frac 12} \left(\| \nabh^{\leq N} \b \|_{L^2(H_u)} + \|\nabh^{\leq N} (\rhoc,\rhodc)\|_{L^2(\Hb_{\ub}),L^2(H_u)} + \| \nabh^{\leq N} \bb \|_{L^2(\Hb_{\ub})}\right),\\
\OO_{\leq N}:= \sup_{u,\ub}\, a^{-\frac 12} \| \nabh^{\leq N}(\chih, \chibh, \om,\omb) \|_{L^2(S_{u,\ub})} +\de^{-1} a^{-\frac 12} \| \nabh^{\leq N}(\zeta,\etab,\eta) \|_{L^2(S_{u,\ub})}\\
+\de^{-1} a^{-1} \| \nabh^{\leq N}(\trch,\trchb) \|_{L^2(S_{u,\ub})},\\
\OO_{N+1}:=\sup_{u,\ub}\, \de^{-1} a^{-\frac 32} \| \nabh^{ N+1}(\widecheck{\trch}, \widecheck{\trchb}) \|_{L^2(S_{u,\ub})} + \de^{-\frac 12} a^{-\frac 12}
\| \nabh^{N+1} (\chih,\om,\eta,\etab) \|_{L^2(H_u)} \\
+ \de^{-\frac 12} a^{-\frac 12} \|\nabh^{N+1} (\chibh,\omb,\eta,\etab) \|_{L^2(\Hb_{\ub})} .
\end{gathered}
\end{equation}
Here, the $H_u$ and $\underline H_{\ub}$ denote the level hypersurfaces of $u$ and $\ub$ (restricted in $\MM_{\triangle}$), and $S_{u,\ub}$ denotes their intersection spheres.
\begin{proposition}
The spacetime can be extended to $\MM_{\triangle}$, with the bounds 
\begin{equation}\lab{eq:bounds-MM-tri}
\RR_{\leq N}+\OO_{\leq N}+\OO_{N+1}\lesssim 1,\quad  N:=s+1.
\end{equation}
\end{proposition}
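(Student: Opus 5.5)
The plan is a standard bootstrap argument in the double null gauge on $\MM_{\triangle}$, closely modeled on the An--Luk scale-critical scheme. The triangular region is the domain of dependence of the annulus $\{1\le r\le 1+\de\}\subset\Si$. There $u\in[-1-\de,-1]$ and $\ub\in[1,1+\de]$, so every null hypersurface segment $H_u$, $\Hb_{\ub}$ inside $\MM_{\triangle}$ has length $\lesssim\de$.

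\textbf{Setup and initial bounds.} Local existence of the vacuum development in a small neighbourhood of $\Si$, together with the double null foliation initialized by $u=-r$, $\ub=r$, comes from standard Cauchy stability. We then assume the bootstrap bounds
\begin{equation*}
\RR_{\leq N}+\OO_{\leq N}+\OO_{N+1}\le C_b
\end{equation*}
on a subregion $\{\ub-u\ge 2\}$, i.e.\ the part of $\MM_{\triangle}$ lying to the future of $\Si$. The initial bounds on $\Si$ listed at the end of Section \ref{sec:setup-Si-doublenull} already match the normalizations in \eqref{eq:def-norms}:
\begin{itemize}
\item $\chih,\chibh,\om,\omb\sim a^{\frac12}$;
\item $\zeta,\eta,\etab\sim \de a^{\frac12}$;
\item $\trchc,\trchbc,\rhoc,\rhodc\sim\de a$;
\item $\b,\bb\sim a^{\frac12}$, so their flux through a $\de$-length surface is $\de^{\frac12}a^{\frac12}$.
\end{itemize}

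\textbf{Curvature estimates.} We run energy estimates for the Bianchi pairs $(\b,(\rhoc,\rhodc))$ along $\nabh_3$ and $((\rhoc,\rhodc),\bb)$ along $\nabh_4$, using \eqref{eq:nab3b}--\eqref{eq:nab4bb} commuted with $\nabh^{\le N}$. We integrate by parts over the spacetime region bounded by $\Si$, $H_u$ and $\Hb_{\ub}$, which is itself of size $\de^2$. The crucial point is that $\a$ and $\aa$ are absent from these equations (footnote \ref{foot;a-aa}). Hence the only genuinely large Ricci coefficients, $\chih,\chibh,\om,\omb\sim a^{\frac12}$, appear only through error terms, for instance $2\chih\cdot\bb$, $\chih\cdot\chibh$ and $\chih\cdot\nabh\hot\etab$. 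Each such term contributes at most $\de\, a^{\frac12}\cdot(\text{flux})$ after integration over a $\de$-interval. Since $\de a\lesssim 1$ gives $\de a^{\frac12}\lesssim a^{-\frac12}\ll1$, these terms are absorbed. The source terms $\nabh(\chih\cdot\chibh)\sim a$ integrate to $\de a\cdot(\de^{\frac12})$, which is consistent with $\RR\lesssim 1$ exactly when $\de a\lesssim 1$. This yields $\RR_{\le N}\lesssim 1+\de a^{\frac12}C_b^2$.

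\textbf{Ricci coefficients.} We use the transport equations of Section \ref{sec:null-str-Bianchi-equations}:
\begin{itemize}
\item $\trch$ by \eqref{eq:e4Raych}, whose source $|\chih|^2\sim a$ integrates to $\de a$;
\item $\trchb$ by \eqref{eq:e3Raych}, in the same way;
\item $\eta$ and $\etab$ by their $\nabh_4$ and $\nabh_3$ equations, where $\b\sim a^{\frac12}$ in $L^2$ integrates to $\de a^{\frac12}$;
\item $\om$ and $\omb$ via the equations with source $\rho$.
\end{itemize}
$\chih$ and $\chibh$ are estimated transversally: along $\nabh_3$ resp.\ $\nabh_4$, whose equations involve only $\nabh\hot\eta$, $\rho$ and products. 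This is the standard trick that avoids $\a,\aa$, and it keeps them at size $a^{\frac12}$, inherited from $\Si$. Top-order control $\OO_{N+1}$ comes from elliptic estimates on $S_{u,\ub}$: we apply the Codazzi equations \eqref{eq:Codazzi-chih}, \eqref{eq:Codazzi-chibh} to $\chih,\chibh$, and div--curl systems for the renormalized mass aspect functions $\nabh\om\pm\ldots$ and $\mu$-type quantities for $\eta,\etab$. The $a^{\frac32}$ weight in $\OO_{N+1}$ for $\nabh^{N+1}\trchc$ accounts for $\nabh^{N+1}|\chih|^2$. Sobolev on spheres of radius $\approx1$ (via the bounds on $K$ from the Gauss equation) lets us pass from $L^2$ to $L^\infty$ in nonlinear terms. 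Closing the bootstrap with $C_b$ large and continuity then extends the solution to all of $\MM_{\triangle}$.

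\textbf{Main obstacle.} The hardest step will be the top-order estimates and the loss of derivatives. The spacelike data control only $\H^{s+1}$ of $\chih,\chibh$ on spheres, not along $H_u$. So we must check that the elliptic/renormalized schemes close with exactly $N=s+1$ derivatives, with initial contributions on $\Si$ bounded by Proposition \ref{prop:bounds-local-existence}. We must also verify carefully that every borderline nonlinear term carries a factor $\de a^{\frac12}$ or $\de a$ rather than $a$, since $a$ is large. I would first do the bookkeeping by signature, assigning each quantity its $(\de,a)$ weight from \eqref{eq:def-norms}, and then confirm that all products respect these weights.
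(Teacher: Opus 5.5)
Your proposal follows the paper's architecture. Both arguments bootstrap on $\RR_{\le N}+\OO_{\le N}+\OO_{N+1}$ and run energy estimates for the pairs $(\b,(\rhoc,\rhodc))$ and $((\rhoc,\rhodc),\bb)$, where the large coefficients only produce errors carrying an extra $\de a^{\frac12}$. Both use transport for $\trch,\trchb,\eta,\etab,\om,\omb$. At top order both use Raychaudhuri with the weaker $a^{\frac32}$ weight, then Codazzi, then the An--Luk renormalized quantities.

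The one genuine difference is how you control $\nabh^{\le N}(\chih,\chibh)$. The paper first integrates the Bianchi equations to get sphere bounds $\|\nabh^{\le N-1}(\b,\bb)\|_{L^2(S_{u,\ub})}\lesssim a^{\frac12}$. It then applies Hodge estimates to \eqref{eq:Codazzi-chih}, \eqref{eq:Codazzi-chibh}. You instead integrate the transversal equations for $\nabh_3\chih$ and $\nabh_4\chibh$ from $\Si$.

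Your route works:
\begin{itemize}
\item The source $\nabh^{N+1}\eta$ (resp.\ $\nabh^{N+1}\etab$) is controlled by the $\OO_{N+1}$ flux on $\Hb_{\ub}$ (resp.\ $H_u$). After Cauchy--Schwarz over an interval of length $\le\de$ it contributes $C_b\de a^{\frac12}$.
\item The terms $\omb\chih$, $\trchb\chih$, $\trch\chibh$ and the commutator terms contribute at most $C_b^2\de a$.
\item Both contributions are $\ll a^{\frac12}$ when $\de a\lesssim 1$ and $a$ is large.
\end{itemize}

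What each route buys:
\begin{itemize}
\item Your route is more elementary and skips the intermediate sphere bounds on curvature. It also directly exhibits the almost-conservation of $\chih$ along $e_3$, which the paper later invokes in Section \ref{sec:formation-of-trapped-surfaces}.
\item Its price is the loss of one derivative on $\eta,\etab$, which ties the order-$\le N$ shear bounds to the top-order bootstrap.
\item The Codazzi route gains a derivative relative to curvature and uses a single elliptic mechanism at every order.
\end{itemize}

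Two small points to fix when you write this out:
\begin{itemize}
\item The Raychaudhuri bound for $\trchc$ is borderline: $\de a$ is exactly the allowed size, and using only the bootstrap bound on $\chih$ would give $C_b^2\de a$, which does not improve. So you must improve $\chih,\chibh$ before $\trch,\trchb$, not after as your list suggests. Your transversal $\chih$ estimate only needs the bootstrap bound on $\trch$, so this ordering is consistent.
\item Your bootstrap region $\{\ub-u\ge 2\}$ contains all of $\MM_{\triangle}$. On $\Si$ one has $u+\ub=0$, and the future of $\Si$ is $\{u+\ub\ge 0\}$. The continuity argument should therefore run over $\MM_{\triangle}\cap\{u+\ub\le\tau\}$.
\end{itemize}
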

\begin{proof}
We make the bootstrap assumptions $\RR_{\leq N}+\OO_{\leq N}+\OO_{N+1}\leq C_b$ with $N:=s+1$, where the $\OO$ and $\RR$ norms are defined in \eqref{eq:def-norms}, and $C_b>0$ is a constant to be determined. We show that we can replace $C_b$ with a constant comparable with $1$, thereby improving the bootstrap assumption by choosing an appropriate $C_b$.

As in An--Luk \cite{AnLuk}, the energy estimates can be derived for the pair $\left(\b,(\rhoc,\rhodc)\right)$ and $\left((\rhoc,\rhodc), \bb\right)$. The main contribution in fact comes from their initial size on $\Si$, whereas the right-hand sides of \eqref{eq:nab3b}-\eqref{eq:nab4bb} only contribute to lower order terms, including those that involve $\OO_{N+1}$---for instance, the contribution from the term $\chih\cdot (\nabh\hot\etab)$, $\chibh\cdot (\nabh\hot\eta)$ in \eqref{eq:nab4P} and \eqref{eq:nab3P} is still lower-order after being integrated:\footnote{We omit the details of the product estimates as they are standard (see e.g., \cite{AnLuk}) and focus on the size analysis.}
\beaa
\iint \chih\cdot (\nabh\hot\etab)\cdot \rhoc \sim \de^2 \cdot C_b a^\frac 12 \cdot C_b a^\frac 12 \cdot C_b a^\frac 12= C_b^3 \de^2 a^\frac 32 \ll (\de^\frac 12 a^\frac 12)^2,
\eeaa
where we plugged in the schematic bounds consistent with $|\rhoc,\chih,\nabh\etab| \lesssim C_b a^\frac 12$. Therefore, we obtain the estimates 
\beaa
\| \nabh^{\leq N}\b \|_{L^2(H_u)} + \|\nabh^{\leq N}(\rhoc,\rhodc)\|_{L^2(\Hb_{\ub})} \lesssim \de^\frac 12 a^\frac 12,\quad \|\nabh^{\leq N}(\rhoc,\rhodc)\|_{L^2(H_u)} + \| \nabh^{\leq N}\bb \|_{L^2(\Hb_{\ub})} \lesssim \de^\frac 12 a^\frac 12,
\eeaa
which yields $\RR_{\leq N}\lesssim 1$. With these estimates established, it is also straightforward to integrate the Bianchi equations to obtain the lower-order bounds on spheres:
\beaa
\| \nabh^{\leq N-1} (\b,\bb) \|_{L^2(S_{u,\ub})}\lesssim a^\frac 12,\quad \| \nabh^{\leq N-1} (\rhoc,\rhodc) \|_{L^2(S_{u,\ub})}\lesssim \de a.
\eeaa

With the curvature quantities improved, we can integrate the null structure equations in Section \ref{sec:null-str-Bianchi-equations} to improve the estimate for all Ricci coefficients. We in fact first improve the estimates for $\chih$ and $\chibh$ by applying the Hodge estimates to \eqref{eq:Codazzi-chih}, \eqref{eq:Codazzi-chibh}:
\begin{equation}\lab{eq:estimate-chih-chibh}
\| \nabh^{\leq N} (\chih,\chibh) \|_{L^2(S_{u,\ub})}\lesssim \| \nabh^{\leq N-1} (\b,\bb)\|_{L^2(S_{u,\ub})}+ a^\frac 12 \lesssim a^\frac 12.
\end{equation}
The estimate for the remaining Ricci coefficients up to $\nabh^{\leq N}$ is straightforward, in view of the fact that they satisfy one of the following schematic equations\footnote{In this proof, we denote by $\Ga$ a schematic Ricci coefficient, and by $R$ a Weyl curvature component.}
\beaa
\nabh_3 \Ga= \Ga\cdot \Ga+R,\quad \nabh_4 \Ga=\Ga\cdot \Ga+R.
\eeaa
The contribution from $R$ is at most $\de a^\frac 12$ by the improved curvature estimates $\RR_{\leq N}\lesssim 1$ we have just obtained, and the contribution from $\Ga\cdot \Ga$ is lower-order for the estimates of $\eta$, $\etab$, $\om$, $\omb$, and exactly the allowed size for $\trch$, $\trchb$ using \eqref{eq:estimate-chih-chibh}. This shows $\OO_{\leq N} \lesssim 1$. 

The $N+1$-th order estimate follows from a similar procedure as in \cite[Section 7]{AnLuk}. We first improve the estimate for $\trch$, $\trchb$ by integrating the Raychaudhuri equations \eqref{eq:e4Raych}, \eqref{eq:e3Raych} by simply using the bootstrap assumption on $\nabh^{\leq N+1}(\chih,\chibh)$; note that we make a weaker assumption on the bounds for $\trch$, $\trchb$ compared with their lower-order controls. We then apply the Hodge estimates again to \eqref{eq:Codazzi-chih}, \eqref{eq:Codazzi-chibh} to improve $\nabh^{\leq N+1}(\chih,\chibh)$. Now, as in \cite{AnLuk}, we introduce the renormalized quantities $\ka$, $\underline\ka$, $\mu_\eta$, and $\underline\mu_{\etab}$ defined as follows:
\begin{equation*}
\begin{gathered}
\ka:= \nabh\om+\dual\nabh \om^\dagger -\frac 12 \b, \quad \nabh_3\om^\dagger=\frac 12\rhodc,\quad \underline{\ka}:= -\nabh\omb+\dual\nabh\omb^\dagger-\frac 12\bb,\quad \nabh_4 \omb^\dagger=\frac 12 \rhodc,\\
 \mu_\eta:=-\divh\eta+\rhoc,\quad \underline{\mu}_{\etab}:=-\divh\etab+\rhoc.
 \end{gathered}
\end{equation*}
These quantities obey transport equations free of differentiated curvature components, see details in \cite{AnLuk}:
\beaa
\nabh_3 \ka, \nabh_4 \mu_\eta, \nabh_3 \underline \mu_{\etab}, \nabh_4\underline \ka = \Ga\cdot \nabh \Ga+\Ga\cdot \Ga\cdot \Ga+\Ga\cdot R+\nabh \Ga, 
\eeaa
and the $\Ga\cdot \Ga\cdot \Ga$ terms always have at least one $\eta$ or $\etab$ factor. Therefore, we have the estimates
\beaa
\| \nabh^{\leq N} (\ka, \underline\mu_{\etab}) \|_{L^2(H_u)}, \| \nabh^{\leq N} (\mu_\eta,\underline\ka) \|_{L^2(\underline H_{\ub})} \lesssim (\de \cdot \de \cdot C_b^2  a)^\frac 12 \ll \de^\frac 12 a^\frac 12.
\eeaa
Then, using Hodge estimates, we recover the $N+1$-th order estimate for $\eta$, $\etab$, $\om$, and $\omb$. 
Therefore, we have $\OO_{N+1}\lesssim 1$, and the proposition follows.
\end{proof}

\subsection{Semi-global existence}\lab{sec:semi-global-existence}

{\bf Characteristic data on $H_{-1}$.}
The bounds in \eqref{eq:bounds-MM-tri} hold, in particular, on $H_{-1}$, the outgoing null cone emanating from $\{r=1\}$ on $\Si$. In fact, we only need to know $\chih$, $\om$, and $\etab$, since other quantities are either non-intrinsic on $H_{-1}$ or they can be determined by the well-known procedure for the characteristic initial value problem, see \cite{Chr1}.

{\bf Semi-global existence.} 
The argument is then reduced to the traditional characteristic problem.
We indicate how this fits into the proof in \cite{CK-TS}. The only
difference is that here $\om\sim a^\frac 12$ on $H_{-1}$, whereas in
the usual setting one imposes $\om=0$ on $H_{-1}$. Examining Proposition
5.5 in \cite{CK-TS}, where $\om$ is estimated through
\beaa
\nabh_3 \om = \rho+|\zeta|^2,
\eeaa
we obtain, instead of
$\om\sim \de^\frac 12 a^\frac 12 |s|^{-\frac 32}$,
the bound
\beaa
\om\sim a^\frac 12+\de^\frac 12 a^\frac 12 |s|^{-\frac 32}.
\eeaa
This additional $a^\frac 12$ term is not amplified as $|s|$ becomes
small. Moreover, in the estimates in which it appears, $\om$ enters
only through the $\nabh_4$ equations. Hence, after integration, the new
contribution is of size $\de a^\frac 12 \ll 1$.

\subsection{Formation of trapped surfaces}\lab{sec:formation-of-trapped-surfaces}

{\bf Lower bound.} 
In view of the bounds \eqref{eq:bounds-MM-tri}, the structure equation 
\beaa
\nabh_3 \chih+\frac 12 \trchb \chih=\nabh\hot \eta+2\omb\chih-\frac 12 \trch\chibh+\eta\hot\eta
\eeaa 
on $\MM_{\triangle}$ implies that the $\chih$ is almost conserved, along the flow of $e_3$, from $\Si$ to $H_{-1}$, modulo $O(\de a)$ errors. Hence, following the anisotropic result of Klainerman--Luk--Rodnianski \cite{KLR} (see also An--Han \cite{AnHan}), the main assumption on $\chih=\thh+\Thh$ on $\Si$ reads
\bea\lab{eq:lower-bound-condition-free-scalars}
\sup_{\vth\in \mathbb{S}^2}\int_1^{1+\de} |\chih(r,\vth)|^2 dr \geq \de a.
\eea
\begin{remark}
Note that we have the relation
\beaa
\d_1 \d_2 \chih= 
-(\BB_{sp}, \BBd_{sp})+(\KK_{sp},-\KKd_{sp})
+O(\de a).
\eeaa
Therefore, it suffices to let the symmetric traceless $2$-tensor $\chih_0$ determined by
\beaa
\d_1\0\d_2\0 \chih_0=-(\BB_{sp}-\KK_{sp},\BBd_{sp}+\KKd_{sp})
\eeaa 
satisfy \eqref{eq:lower-bound-condition-free-scalars}. Here $\d_1\0$ and $\d_2\0$ are the corresponding Hodge operators defined with respect to the round metric $\gz:=r^2 \gacr$.
In particular, we can take $k\equiv 0$ and let the $\chih_0$ determined by $(\BB_{sp},\BBd_{sp})$ verify the requirement, i.e.,  our class of data that leads to trapped surface formation contains time-symmetric initial data.
\end{remark}

\appendix
\section{The case \texorpdfstring{$a\approx \de^{-1}$}{a=1/de}}
\lab{appendix:a-de-inverse}

In Section \ref{sec:construction-Cauchy-data}, we prescribe the free
scalars $(\BB_{sp},\BBd_{sp}, \KK_{sp}, \KKd_{sp})$ and denote their size
by $a^\frac 12>0$. The local existence result, Theorem
\ref{thm:local-existence-precise}, gives the construction on a
short-pulse interval of length $\eps_{loc}$, chosen in terms of the size
of the data. In particular, the argument there closes in the regime $\eps_{loc} a\ll 1$.
Thus $\eps_{loc}$ itself should be viewed as the short-pulse scale in
that construction.

What is not covered directly by that argument is the critical situation
in which one prescribes a short-pulse scale $\de$ and takes
$a\approx \de^{-1}$, i.e., $\de a\approx 1$. 
This is precisely the original short-pulse regime of Christodoulou
\cite{Chr1}. In this appendix, we provide the construction of this type
of short-pulse data in the spacelike setting.

\begin{proposition}\lab{prop:local-existence-Chr}
Fix a small constant $\de>0$. Given $(\BB_{sp},\BBd_{sp}, \KK_{sp}, \KKd_{sp})$ of size $\de^{-\frac 12}$ on $(1,2)\times \mathbb{S}^2$, there exists a unique initial data set solving \eqref{ece} on $(1,1+\de)\times \mathbb{S}^2$ such that all conditions stated in Theorem \ref{thm:local-existence-precise} hold.
\end{proposition}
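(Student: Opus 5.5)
We will rerun the iteration scheme behind Theorem \ref{thm:local-existence-precise} from \cite{CK26} on the fixed interval $(1,1+\de)$, with $M=a^{\frac 12}=\de^{-\frac 12}$. The change is in the bookkeeping. In the regime $\de a\ll 1$, every quadratic term is simply absorbed. Now $\de a\approx 1$, so we must sort the unknowns by their scaling sizes, in the spirit of Christodoulou's signature counting \cite{Chr1}. The scheme starts from the trivial sphere data at $r=1$ and runs the coupled transport--elliptic system: the transport equations \eqref{eq:thc}, \eqref{eq:K}, \eqref{eq:trt} together with the corresponding ones for $\ao$, $u$, $b$, $\Xi$, and the elliptic (Hodge) systems on each $S_r$ that recover $\thh,\Thh,\Pi,Y$ from the free scalars via \eqref{eq:prescribed-scalar-conditions}. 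The gauge conditions \eqref{eq:ell=1-condition-u-solution1}--\eqref{eq:ell=1-condition-u-solution3} are imposed at each step, exactly as before.

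The hierarchy of bounds we aim to propagate, uniformly on $(1,1+\de)$, is the following:
\begin{equation*}
\|\thh,\Thh,\Pi,Y\|_{\H_0^s(S_r)}\lesssim \de^{-\frac 12},\quad \|\thc,\trt\|_{\H_0^{s+1}(S_r)}\lesssim 1,\quad \|\Kc,\Xi,\slashed{d}(\log\ah),\ao,u,b\|\lesssim \de^{\frac 12}.
\end{equation*}
This is Proposition \ref{prop:bounds-local-existence} with $a=\de^{-1}$. The elliptic quantities are controlled pointwise in $r$ by the free scalars, up to errors. Each transport quantity gains one factor of $\de$ from integration over the short interval. The key step is to check that every quadratic term in the transport equations is of the form $\Ga\cdot\Ga$ where at most one factor is of the large type $\de^{-\frac 12}$, except in the equations for $\thc$ and $\trt$.

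In \eqref{eq:thc} and \eqref{eq:trt}, the term $\widetilde\Ga_1\cdot\widetilde\Ga_1$ contains $|\thh|^2$, $\thh\cdot\Thh$ and $|\Thh|^2$, of size $\de^{-1}$. After integration over length $\de$ these give $O(1)$. That is exactly the allowed size: $\trth$ may deviate from $2r^{-1}$ by $O(1)$, which is still below the background value $2$ only if we choose the implicit constant small. This is the analogue of $\de a$ being a small multiple of $1$, i.e.\ taking the data of size $c\,\de^{-\frac 12}$ with $c$ small, or else allowing $\trth$ to stay bounded away from $0$ by a Raychaudhuri-type monotonicity argument. In \eqref{eq:K}, the terms $\divh Y$ and $\Ga_1\cdot\Ga_2$ must be shown to carry only one large factor. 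We pair, for example, $\thh\cdot\nabh\hot\slashed{d}(\log\ah)$ as $\de^{-\frac 12}\cdot\de^{\frac 12}=O(1)$, so integration gives $\Kc=O(\de)$ plus $\de\cdot\de^{-\frac 12}$ from $\divh Y$, i.e.\ $\Kc\lesssim\de^{\frac 12}$.

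The main obstacle is that nonlinear terms in the elliptic equations feed the $O(1)$-sized $\thc,\trt$ back into the Hodge systems for $\thh,\Thh$ with coefficients that are no longer small. For instance, $\trth\,\thh$ and $\trt\,\Thh$ appear in the Codazzi-type equations. We will treat these as lower-order zeroth-order perturbations of $\d_2$ on a sphere that is $O(\de^{\frac 12})$-close to round, since $u,\Kc$ are small. Their coercivity on $\ell\geq2$ modes is therefore unaffected, and the terms are absorbed using the smallness of $\nabh$-derivatives of the $O(1)$ quantities. Closing the contraction then follows as in \cite{CK26}, in the weighted norms given by the hierarchy above; the difference estimates inherit the same structure. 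Uniqueness follows from the same contraction estimate.
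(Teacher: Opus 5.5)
There is a genuine gap, and it is in the gauge. You impose $\mu_{\ell\geq1}=0$ ``exactly as before'', with $\mu=-\laph(\log\ah)+K-\frac14(\trth)^2$ as in \eqref{eq:def-mu}. In this regime $\thc=O(1)$, and it is in general angle-dependent, since its main part is $-\frac12\int_1^r(|\thh|^2+|\Thh|^2)\,dr'$. The condition therefore reads
\[
(\laph\log\ah)_{\ell\geq1}=\Kc_{\ell\geq1}-\big(r^{-1}\thc+\tfrac14\thc^2\big)_{\ell\geq1}=O(1).
\]
So $\ao$ and $\slashed{d}(\log\ah)$ are $O(1)$, not $O(\de^{\frac12})$ as your hierarchy asserts.

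Your count for $\Kc$ relies on exactly this false smallness, in the step $\thh\cdot\nabh\hot\slashed{d}(\log\ah)\sim\de^{-\frac12}\cdot\de^{\frac12}$. More seriously, in the Hodge system \eqref{eq:d1-d2-Thh-estimate} for $\ah\Thh$, the products in which the lapse and its derivatives multiply $\Thh$ and $\Pi$ become of order $\de^{-\frac12}$. That is the same order as the principal source $(\KK_{sp},-\KKd_{sp})$, so the bound for $\Thh$ no longer closes by size counting.

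The missing idea is the paper's modification of the gauge scalar. Replace $\mu$ by $\mu'=-\laph(\log\ah)+\Kc$ and impose $\mu'_{\ell\geq1}=0$. Then $(\laph\log\ah)_{\ell\geq1}=\Kc_{\ell\geq1}$ together with $\overline{\ao}=0$ gives $\|\ao\|_{\H^{s+2}}\lesssim\|\Kc\|_{\H^s}\lesssim\de^{\frac12}$. The small quantities are then estimated independently of both the medium quantities ($\trth,\trt$) and the large ones. Every product outside the $\thc$ and $\trt$ equations is then small$\times$large or medium$\times$medium. Note that this means the proposition is really proved with $\mu'_{\ell\geq1}=0$ in place of the condition $\mu_{\ell\geq1}=0$ from Theorem \ref{thm:local-existence-precise}.

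There are also some smaller errors. First, $b$ is not small: \eqref{eq:nab-hot-b-estimate} forces $\|b\|\sim\|\thh\|\sim\de^{-\frac12}$, so $b$ belongs with the large quantities. Second, your ``main obstacle'' is misidentified, and the proposed fix is wrong. The Codazzi system for $\thh$ is exactly linear, \eqref{eq:d1-d2-thh-estimate}, and the one for $\ah\Thh$ contains no medium$\times$large products once the lapse is small. So there is no term like $\trth\,\thh$ to absorb. In any case, angular derivatives of the $O(1)$ quantities are not small: $\nabh\thc\sim\de\cdot\de^{-\frac12}\cdot\de^{-\frac12}=1$. Third, no sign condition on $\trth$ and no small constant in front of the data are needed for this local statement. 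Beyond the hierarchy, the only smallness that matters is $\Kc=O(\de^{\frac12})$, which keeps the uniformization gauge valid and the Hodge operators close to round.

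With the $\mu'$ gauge, the closing order is as follows:
\begin{enumerate}
\item $\Kc$ first, using $Y\lesssim\de^{-\frac12}$ from the free data and $\ell\le1$ corrections bounded by $C_b$.
\item Then $u$ and $\ao$.
\item Then $\thh$, $b$ and $\Thh$ by the elliptic equations.
\item Finally $\thc$, $\Xi$, $\Pi$ and $\trt$ by transport, where $\|\thc,\trt\|\lesssim\de\cdot\de^{-\frac12}\cdot\de^{-\frac12}=1$ with a constant independent of $C_b$.
\end{enumerate}
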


Before proving Proposition \ref{prop:local-existence-Chr}, we make the following remarks.
\begin{enumerate} 
\item\lab{rem1}
We proceed similarly to \cite{CK26} to construct in the gauge where the metric can be expressed in the form \eqref{eq:metric-abu}, together with the conditions \eqref{eq:ell=1-condition-u-solution1}-\eqref{eq:ell=1-condition-u-solution3}. The main difference from the previous proof is that the expansions $\trth$ and $\trt$ acquire the following contributions respectively, in view of the equations \eqref{eq:thc} and \eqref{eq:trt}:
\beaa
-\frac 12 \int_1^{r} |\thh|^2+|\Thh|^2\, dr' ,\quad -\int_1^{r} \thh\cdot \Thh\, dr'.
\eeaa
Both are of size $\de \cdot (\de^{-\frac 12})^2=1$, and hence are not small.
\item
 Recall that in the proof of Theorem \ref{thm:local-existence-precise} given in \cite{CK26}, we distinguished between small quantities $\Ga$ and large quantities $\widetilde \Ga\setminus \Ga$:
 \beaa
 \Ga= \{u, \thc, \trt, \slashed{d}(\log\ah), \Xi\} ,\quad \widetilde \Ga\setminus \Ga=\{ b, \thh, \Thh, \Pi, Y\}.
 \eeaa
According to Remark \ref{rem1}, we now further divide them into
 \beaa
 \Ga_{small}= \{u, \slashed{d}(\log\ah), \Xi\} , \quad \Ga_{medium}=\{\trth,\trt\},\quad \Ga_{large}=\{ b, \thh, \Thh, \Pi, Y\}.
 \eeaa
\item We note that, apart from the transport equations of $\thc$ and $\trt$, the product terms in all remaining equations have the form either $\Ga_{small}\cdot\Ga_{large}$ or $\Ga_{medium}\cdot \Ga_{medium}$. Moreover, as in the original proof in \cite{CK26}, the estimate for $\Ga_{small}$ is independent of $\Ga_{large}$; see, for instance, \eqref{eq:laph-u-estimate-part}-\eqref{eq:u-ell=1-estimate-part} and \eqref{eq:nu-estimate}-\eqref{eq:curl-Xi-estimate}. To also avoid the dependence on $\Ga_{medium}$, we modify the definition of the gauge scalar $\mu$:
\beaa
\mu':=-\laph(\log\ah)+\Kc,
\eeaa
and, impose the condition $\mu'_{\ell\geq 1}=0$ instead of $\mu_{\ell\geq 1}=0$. Then, in view of \eqref{eq:laph-ao-estimate-part}-\eqref{eq:ao-ell=0-estimate-part}, the estimate for $\ao$ is also independent of $\Ga_{medium}$.
\item
The most important point is that the perturbed Gauss curvature $\Kc$ remains small, of size $O(\de^{\frac 12})$. This ensures that the gauge choice based on effective uniformization is still valid, and that the relevant Hodge operators remain close to those associated with the round metric.
\end{enumerate}

\begin{proof}[Proof of Proposition \ref{prop:local-existence-Chr}]
We prove Proposition \ref{prop:local-existence-Chr} by an iteration argument. We omit the details of the iteration scheme and the limit identification, since they are essentially the same as in \cite{CK26}. Instead, we focus on the improvement of the estimates needed to close the iteration. We assume
\beaa
\| \nabh^{\leq 2} \ao,\nabh^{\leq 1}\Xi,\Kc \|_{\H^{s}(S_r)} \lesssim C_b \de^\frac 12,\quad \| \thc, \trt \|_{\H^{s+1}(S_r)} \lesssim C_b,\quad \| \thh, \Thh, \Pi \|_{\H^{s+1}(S_r)} \lesssim C_b\de^{-\frac 12},
\eeaa
where $C_b>0$ is the boundedness constant which we seek to improve.
Schematically, this reads 
\beaa
\| \Ga_{small} \|+ \de^\frac 12 \| \Ga_{medium}\| + \de \| \Ga_{large} \|\lesssim C_b\de^\frac 12.
\eeaa

We recall the HCS equations in \cite{CK26}, but with the more careful categorization of the $\Ga$ terms (also note the change from $\mu$ to $\mu'$):\footnote{We use a two-level schematic notation. The labels
``small'', ``medium'', and ``large'' refer to the size classes introduced
above, while the subscript on $\Ga_i$ was introduced in \cite{CK25,CK26} that indicates the differentiability order.  More precisely, we define
\begin{equation*}
\Ga_0=\{\ao,u,b\},\quad \Ga_1=\{r^{-1} \Ga_0,\thc,\, \slashed{d} (\log\ah),\Xi,\trt,\thh,\Thh,\Pi \},\quad \Ga_2=\{r^{-1}\Ga_1,\nabh \Ga_1,\Kc,Y\}.
\end{equation*}
The combined notations are understood as the intersection of the two classes.}
\begin{align}
\lab{eq:N-thc-estimate-part}
N \thc &= \mu'_{\ell=0} -3r^{-1}\thc-2 r^{-2}\ao -\frac 12 |\thh|^2-\frac 12 |\Thh|^2 \\
&\quad \notag+(\Ga_1)_{medium}\cdot (\Ga_1)_{large}, \\
\lab{eq:N-K-ell-neq-1-estimate-part}
N \Kc &= r^{-1} \mu'_{\ell=0}- (\BB_{sp}+\Bb_{\ell\leq 1}) -3r^{-1}\Kc -2 r^{-3} \ao -r^{-2} \thc \\
&\notag\quad +2{\slashed{d}(\log\ah)\cdot Y}-\thh\cdot (\nabh\hot \slashed{d}(\log\ah))\\
&\notag \quad +(\Ga_1)_{medium}\cdot (\Ga_2)_{medium},
\end{align}
\begin{align}
\lab{eq:laph-ao-estimate-part}
(\laph (\log\ah))_{\ell\geq 1}&= \Kc_{\ell\geq 1},\\
\lab{eq:ao-ell=0-estimate-part}
\overline{\ao} &=0,\\
\lab{eq:laph-u-estimate-part}
-\laph_{\gacr} u-2u &=r^2 \Kc+r^2\Kc \cdot u,\\
\lab{eq:u-ell=1-estimate-part}
u_{\ell=1} &=u\cdot u.\\
\lab{eq:d1-d2-thh-estimate}
\d_1 \d_2 \thh &= \big(\frac 12 \laph\thc,0\big)-(\BB_{sp}+\Bb_{\ell\leq 1},\BBd_{sp}+\Bbd_{\ell\leq 1}), \\
\d_1 Y &=  (\BB_{sp}, \BBd_{sp})+(\Bb_{\ell\leq 1},\Bbd_{\ell\leq 1}) ,
\end{align}
\begin{align}
\lab{eq:N-trt-estimate}
N\trt &= 2 r^{-1} \Pi  - r^{-1} \trt -\thh \cdot \Thh +(\Ga_1)_{medium}\cdot (\Ga_1)_{large} ,\\
\lab{eq:d1-d2-Thh-estimate}
\d_1 \d_2 (\ah \Thh)
 &= (\KK_{sp}, -\KKd_{sp})+(\Kk_{\ell\leq 1},\Kkd_{\ell\leq 1})+ (\Ga_1)_{large}\cdot (\Ga_2)_{small}\\
 &\notag \quad +(\Ga_1)_{small}\cdot (\Ga_2)_{large}+(\Ga_1)_{medium} \cdot (\Ga_2)_{medium}, \\
\lab{eq:nu-estimate}
\divh \Xi &=0, \\
\lab{eq:curl-Xi-estimate}
\ah N \curlh \Xi  &=-4r^{-1} \curlh \Xi+ \KKd_{sp}-\Kkd_{\ell\leq 1},\\
\lab{eq:estimate-Pi}
\laph \left(\Pi +\frac 12 \trt \right) &= \KK_{sp} + \Kk_{\ell\leq 1},\\
\lab{eq:average-Pi-estimate}
\overline{\Pi}&= -\frac 12 \overline{\trt}, 
\end{align}
\begin{align}
\lab{eq:nab-hot-b-estimate}
\nabz\hot b^{\flat,0} &=-2 e^{-2u} \ah\, \thh,\\
\lab{eq:div-b-ell=1}
(\divz (e^{2u} b^{\flat,0}))_{\ell=1} & = -\left(e^{2u} (\ah\, \trth -2r^{-1})\right)_{\ell=1},\\
\lab{eq:curl-b-ell=1}
(\curlz (e^{2u} b^{\flat,0}))_{\ell=1}&=0.
\end{align}

Here, $(\Bb_{\ell\leq 1}, \Bbd_{\ell\leq 1})$ and $(\Kk_{\ell\leq 1},\Kkd_{\ell\leq 1})$ are the $\ell\leq 1$ corrections naturally arise for the solvability of the Codazzi equations, see details in \cite{CK25,CK26}.

We proceed as follows:\footnote{Throughout the proof, the implicit constant in $\lesssim$ is independent of $C_b$.}
\begin{itemize}
\item We first integrate the transport equation for $\Kc$,
\eqref{eq:N-K-ell-neq-1-estimate-part}. Projecting
\eqref{eq:d1-d2-thh-estimate} to $\ell\leq 1$ and using the rough bound
$\| \thc \|_{\H^{s+1}(S_r)} \lesssim C_b$, we deduce
\begin{equation*}
| \Bb_{\ell\leq 1},\Bbd_{\ell\leq 1} | \lesssim C_b.
\end{equation*}
Together with the size $\BB_{sp}\sim \de^{-\frac 12}$ of the free scalars, the Hodge equation for $Y$ then gives
\begin{equation*}
\|Y\|_{\H^s(S_r)} \lesssim \de^{-\frac 12}+C_b \lesssim \de^{-\frac 12}.
\end{equation*}
The remaining nonlinear terms in \eqref{eq:N-K-ell-neq-1-estimate-part}
are of the form $\Ga_{small}\cdot \Ga_{large}$ or
$\Ga_{medium}\cdot \Ga_{medium}$, and are bounded by $C_b^2$.
Therefore, integrating \eqref{eq:N-K-ell-neq-1-estimate-part}, we obtain
\begin{equation*}
\| \Kc \|_{\H^s(S_r)} \lesssim \de\cdot \de^{-\frac 12}+\de C_b+\de C_b^2 \lesssim \de^\frac 12.
\end{equation*}
The estimate for $u$ then follows from
\eqref{eq:laph-u-estimate-part}-\eqref{eq:u-ell=1-estimate-part} and the improved bound for $\Kc$:
\begin{equation*}
\| u \|_{\H^{s+2}(S_r)} \lesssim \de^\frac 12.
\end{equation*}
\item Using the equation \eqref{eq:laph-ao-estimate-part}, which is the relation implied by the condition $\mu'_{\ell\geq 1}=0$, together with \eqref{eq:ao-ell=0-estimate-part}, we obtain the improved estimate for $\ao$:
\begin{equation*}
\| \ao \|_{\H^{s+2}(S_r)} \lesssim \|\Kc\|_{\H^s(S_r)} \lesssim \de^\frac 12.
\end{equation*}
\item The Codazzi equation \eqref{eq:d1-d2-thh-estimate} gives a direct
estimate for $\thh$. The principal source is the prescribed free scalar
$(\BB_{sp},\BBd_{sp})$, of size $\de^{-\frac 12}$. The $\ell\leq 1$ correction is
bounded by $C_b$, and the $\laph\trth$ term is also bounded by $C_b$.
Therefore,
\begin{equation}\lab{eq:improved-estimate-thh}
\| \thh \|_{\H^{s+1}(S_r)} \lesssim \de^{-\frac 12}+C_b \lesssim \de^{-\frac 12}.
\end{equation}
This also yields the estimate for $b$, in view of \eqref{eq:nab-hot-b-estimate}, \eqref{eq:div-b-ell=1}, and \eqref{eq:curl-b-ell=1}:
\begin{equation*}
\|b\|_{\H^{s+2}(S_r)} \lesssim \de^{-\frac 12}+C_b^2 \lesssim \de^{-\frac 12}.
\end{equation*}
\item The estimate for $\Thh$ using
\eqref{eq:d1-d2-Thh-estimate} is similar. The principal source is
$(\KK_{sp},-\KKd_{sp})$, of size $\de^{-\frac 12}$, while the remaining product
terms are either $\Ga_{small}\cdot\Ga_{large}$ or
$\Ga_{medium}\cdot\Ga_{medium}$. Hence
\begin{equation}\lab{eq:improved-estimate-Thh}
\| \Thh \|_{\H^{s+1}(S_r)} \lesssim \de^{-\frac 12}+C_b^2 \lesssim \de^{-\frac 12}.
\end{equation}
\item To estimate $\thc$,  we use \eqref{eq:improved-estimate-thh} and \eqref{eq:improved-estimate-Thh} 
and integrate \eqref{eq:N-thc-estimate-part} to deduce
\begin{equation*}
\| \thc \|_{\H^{s+1}(S_r)} \lesssim \de\cdot \de^{-\frac 12}\cdot \de^{-\frac 12} =1.
\end{equation*}
\item It is straightforward to obtain the estimate for $\Xi$
\begin{equation*}
\| \Xi \|_{\H^{s+1}(S_r)} \lesssim \de \cdot (\de^{-\frac 12}+ C_b^2) \lesssim \de^\frac 12,
\end{equation*}
and the estimate for $\Pi+\frac 12 \trt$
\begin{equation*}
\| \Pi+\frac 12 \trt \|_{\H^{s+1}(S_r)}\lesssim \de^{-\frac 12}.
\end{equation*}
The estimate for $\trt$ is similar to that for $\thc$. We obtain
\begin{equation*}
\| \trt \|_{\H^{s+1}(S_r)} \lesssim 1.
\end{equation*}
\end{itemize}
This closes the estimates. The existence result then follows by the same iteration procedure as in \cite{CK26}.
\end{proof}

{\bf The extension to asymptotically flat data.}
Having constructed Christodoulou-type data in the short-pulse region, we now discuss the question of extending such data to spacelike infinity. The main difference again comes from the fact that $\thc$ and $\trt$, even after subtracting their Minkowski values, are no longer small. Indeed, we have shown that
\beaa
\trth=\frac{2}r-\frac 12 \int_1^r \left(|\thh_0|^2 +|\Thh_0|^2\right)\,dr'+O(\de^\frac 12),
\quad
\trt=-\int_1^r \thh_0\cdot \Thh_0\,dr'+O(\de^\frac 12).
\eeaa
This creates a major obstruction for the global problem.\footnote{In our language, to extend the solution to $r=\infty$ through a bootstrap argument, one needs smallness relative to a suitable reference solution.} This is the essential reason why Li--Yu \cite{LiYu} impose the symmetry condition, which in the characteristic setting reads
\beaa
\int_0^\de |\chih|^2(\ub,\vth)\,d\ub
\quad
\text{is independent of $\vth\in\mathbb S^2$.}
\eeaa
Under this condition, the outgoing expansion $\trch$ becomes close to the corresponding Schwarzschild value. In our setting, to obtain such closeness, one needs to impose analogous conditions, for instance
\beaa
\int_1^{1+\de} \thh_0\cdot \Thh_0 \, dr=O(\de^\frac 12),
\quad
\int_1^{1+\de} \left(|\thh_0|^2 +|\Thh_0|^2\right)(r,\vth)\,dr
\quad
\text{is independent of $\vth\in\mathbb S^2$.}
\eeaa
Apart from this additional requirement, the proof follows the same strategy, and we do not pursue the details here.


\begin{thebibliography}{99}
\bibitem{An12}
X.~An,
\newblock \emph{Formation of trapped surfaces from past null infinity},
\newblock arXiv:1207.5271.

\bibitem{An19}
X.~An,
\newblock \emph{A scale-critical trapped surface formation criterion: a new proof via signature for decay rates},
\newblock \emph{Ann. PDE} \textbf{8} (2022), no.~1, Paper No.~3.

\bibitem{AnHan}
X.~An and Q.~Han,
\newblock \emph{Anisotropic dynamical horizons arising in gravitational collapse},
\newblock arXiv:2010.12524.

\bibitem{AnLuk}
X.~An and J.~Luk,
\newblock \emph{Trapped surfaces in vacuum arising dynamically from mild incoming radiation},
\newblock \emph{Adv. Theor. Math. Phys.} \textbf{21} (2017), no.~1, 1–120.

\bibitem{BieriChrusciel2017}
L.~Bieri and P.~T.~Chru\'sciel,
\newblock \emph{Future-complete null hypersurfaces, interior gluings, and the Trautman--Bondi mass},
\newblock \emph{Proc. Harv. CMSA} \textbf{1} (2017), 1--31.



\bibitem{Gauge-scalars-preparation}
X.~Chen,
\newblock \emph{Gauge scalars for the Einstein constraint equations},
\newblock in preparation.


\bibitem{CK25}
X.~Chen and S.~Klainerman,
\emph{Solving the constraint equation for general free data},
arXiv:2512.22704.

\bibitem{CK-TS}
X.~Chen and S.~Klainerman,
\emph{Formation of Trapped Surfaces in Geodesic Foliation}, 
\newblock \emph{Comm. Math. Phys.}
\textbf{407}, 98 (2026).

\bibitem{CK26}
X.~Chen and S.~Klainerman,
\emph{Forward Construction of Vacuum Initial Data with Borderline Decay},
arXiv:2606.08716.


 \bibitem{Chr1} D. Christodoulou, \textit{The formation of Black Holes  in General Relativity}, EMS Monographs in Mathematics, 2009.
 
 
 
 \bibitem{C-K}D. Christodoulou and S. Klainerman, \textit{The nonlinear stability of the Minkowski space},
 Princeton University Press, 1993.


\bibitem{ChruscielDelay2003}
P.~T.~Chru\'sciel and E.~Delay,
\newblock \emph{On mapping properties of the general relativistic constraints operator in weighted function spaces, with applications},
\newblock \emph{M\'em. Soc. Math. Fr. (N.S.)} \textbf{94} (2003), vi+103~pp.



\bibitem{Corvino2000}
J.~Corvino,
\newblock \emph{Scalar curvature deformation and a gluing construction for the Einstein constraint equations},
\newblock \emph{Comm. Math. Phys.} \textbf{214} (2000), no.~1, 137--189.

\bibitem{CorvinoSchoen}
J.~Corvino and R.~M.~Schoen,
\newblock \emph{On the asymptotics for the vacuum Einstein constraint equations},
\newblock \emph{J. Differential Geom.} \textbf{73} (2006), no.~2, 185--217.



\bibitem{GiorgiShenWanBosonStars}
E.~Giorgi, D.~Shen, and J.~Wan,
\newblock \emph{Cauchy data for multiple collapsing boson stars},
\newblock arXiv:2512.01844.

 \bibitem{KLR} S. Klainerman, J.  Luk and  I. Rodnianski,   \textit{ A   Fully Anisotropic Mechanism for Formation of Trapped Surfaces in Vacuum},  
Inventiones,  Volume 198, Issue 1 (2014), Page 1-26.


\bibitem{KR05}
S.~Klainerman and I.~Rodnianski,
\emph{Causal geometry of Einstein-vacuum spacetimes with finite curvature flux},
Invent. Math. \textbf{159} (2005), 437--529.


\bibitem{KRodn}
S.~Klainerman and I.~Rodnianski,
\emph{On the formation of trapped surfaces},
Acta Math. \textbf{208} (2012), 211--333.


\bibitem{LiMei-CollapsingVacuum}
J.~Li and H.~Mei,
\emph{A construction of collapsing spacetimes in vacuum},
Comm. Math. Phys. \textbf{378} (2020), no.~2, 1343--1389.

\bibitem{LiYu}
J.~Li and P.~Yu,
\newblock \emph{Construction of Cauchy data of vacuum Einstein field equations evolving to black holes},
\newblock \emph{Ann. Math.} (2) \textbf{181} (2015), 699--768.


\bibitem{LR15}
J.~Luk and I.~Rodnianski,
\emph{Local propagation of impulsive gravitational waves},
Comm. Pure Appl. Math. \textbf{68} (2015), no. 4, 511--624.


\bibitem{LR17}
J. Luk and I. Rodnianski,
Nonlinear interaction of impulsive gravitational waves for the vacuum Einstein equations,
{\em Cambridge J. Math.} {\bf 5} (2017), no. 4, 435--570.


\bibitem{MaoOhTao2023}
Y.~Mao, S.-J.~Oh, and Z.~Tao,
\newblock \emph{Initial data gluing in the asymptotically flat regime via solution operators with prescribed support properties},
\newblock arXiv:2308.13031.



\bibitem{Shen23}
D.~Shen,
\newblock \emph{Stability of the Minkowski space with minimal decay},
\newblock arXiv:2310.07483.


\bibitem{Shen24}
D.~Shen,
\newblock Exterior stability of Minkowski spacetime with borderline decay,
\newblock {\em Ann. Sci. \'Ec. Norm. Sup\'er.}, {\bf 59} (2026), no.~4, 867--907.

\bibitem{ShenWan2025}
D.~Shen and J.~Wan,
\newblock \emph{Formation of multiple black holes from Cauchy data},
\newblock arXiv:2506.16117.

\bibitem{ShenWan26ADM}
D.~Shen and J.~Wan,
\newblock Cauchy data for formation of multiple black holes with prescribed ADM parameters,
\newblock arXiv:2601.01517.




\end{thebibliography}
\end{document}